\title{The Complexity of the Shapley Value for Regular Path Queries} 
\author{Majd Khalil}{Technion, Haifa, Israel}{smajd11@cs.technion.ac.il}{}{}
\author{Benny Kimelfeld}{Technion, Haifa, Israel}{bennyk@cs.technion.ac.il}{0000-0002-7156-1572}{}
\authorrunning{M.~Khalil and B.~Kimelfeld} 
\keywords{Path queries
Regular path queries
Graph databases
Responsibility
Shapley value} 
\def\e#1{\emph{#1}}
\def\lbl{\mathrm{lbl}}
\def\Sigmastar{\mathord{\Sigma^*}}
\def\angs#1{\langle #1\rangle}
\def\shapley{\mathrm{Shapley}}
\def\q#1{q_{#1}}
\def\pqshapleyval#1{\mathrm{Shapley}\angs{#1}}
\def\rpqshapley#1{\mathsf{RPQShapley}\angs{#1}}
\def\crpqshapley#1{\mathsf{CRPQShapley}\angs{#1}}
\def\rpqshapleyv#1{\mathsf{RPQShapley}^{\mathsf{v}}\angs{#1}}
\def\crpqshapleyv#1{\mathsf{CRPQShapley}^{\mathsf{v}}\angs{#1}}
\def\vpqv{v_{pq}^{\mathsf{v}}}
\def\dbshapley#1{\mathsf{DBShapley}\angs{#1}}
\def\dbshapleyval#1{\mathrm{Shapley}\angs{#1}}
\newcommand{\eat}[1]{}
\def\endo{_{\mathsf{n}}}
\def\exo{_{\mathsf{x}}}
\def\fpsharpp{\mathord{\mathrm{FP}^{\mathrm{\#P}}}}
\def\sharpp{\mathrm{\#P}}
\def\np{\mathrm{NP}}
\def\nl{\mathrm{\textsc{NLogspace}}}
\def\exp{\mathrm{\textsc{Expspace}}}
\def\pspace{\mathrm{\textsc{Pspace}}}
\def\ptime{\mathrm{\textsc{Ptime}}}
\newcommand*{\MyDef}{\mathrm{def}}
\newcommand*{\eqdefU}{\ensuremath{\mathop{\overset{\MyDef}{=}}}}
\newcommand*{\eqdef}{\mathrel{\overset{\MyDef}{\resizebox{\widthof{\eqdefU}}{\heightof{=}}{=}}}}
\def\set#1{\mathord{\{#1\}}}
\def\qrst{Q_{\mathsf{RST}}}
\def\M{\mathcal{M}}
\DeclarePairedDelimiter\abs{\lvert}{\rvert}%
\def\tup#1{\vec{#1}}
\def\prob{\mathrm{Pr}}
\begin{document}

\maketitle

\begin{abstract}
    A path query extracts from a labeled graph vertex tuples based on the words that are formed from the paths that connect the vertices. We study the computational complexity of measuring the contribution of edges and vertices to an answer of a path query. We focus on conjunctive regular path queries. To measure this contribution, we adopt the traditional Shapley value from cooperative game theory. This value has been recently proposed and studied in the context of relational database queries, and has uses in a plethora of other domains, including importance measurement of feature values for machine-learning classifiers.
    
    We first study edge contribution and show that the exact Shapley value is almost always hard to compute. Specifically,
    it is \#P-hard to calculate the contribution of an edge whenever at least one (non-redundant) conjunct allows for a word of length three or more. In the case of regular path queries (i.e., no conjunction),
    the problem is tractable if the query 
    has only words of length at most two; hence, 
    this property fully characterizes the tractability of the problem.
    On the other hand, if we allow for an approximation error, then it is straightforward to obtain an
    efficient scheme (FPRAS) for an additive approximation. 
    Yet, a multiplicative approximation is harder to obtain. 
    We establish that in the case of conjunctive regular path queries, 
    a multiplicative approximation of the
    Shapley value of an edge can be computed in polynomial time if and only if all query atoms are finite languages (assuming non-redundancy and conventional complexity limitations). 
    We also study the analogous situation where we wish to determine the contribution of a vertex, rather than an edge, and establish results of a similar nature.
\end{abstract}

\section{Introduction}
\label{chap:intro}

Graph databases arise in common applications where the underlying data is a network of entities, and especially when connectivity and path structures are of importance. Such usage spans many fields, including the Semantic Web~\cite{DBLP:conf/pods/ArenasP11}, social networks~\cite{DBLP:conf/icdt/Fan12}, biological networks~\cite{DBLP:journals/biodatamining/LysenkoRSMRA16, yoon2017use}, data provenance~\cite{DBLP:conf/edbt/AnandBL10}, fraud detection~\cite{sadowski2014fraud}, recommendation engines~\cite{DBLP:conf/IEEEwisa/YiLFS17}, and many more.  In their simplest form, graph databases are finite, directed, edge-labeled graphs. Vertices represent entities, and edges of different labels are relationships of different types thereof. The connected nature of graph databases raises the need for tools that allow inspecting and analyzing the structure and patterns present in the data; this is typically realized in the form of queries that enable users to specify the structure of paths of interest.

A canonical example of a graph query is the Regular Path Query (RPQ)~\cite{DBLP:conf/sigmod/CruzMW87, DBLP:conf/pods/CalvaneseGLV99, DBLP:conf/pods/ConsensM90, DBLP:conf/pods/Yannakakis90}.  RPQs allow the specification of paths using regular expressions over the edge labels. When evaluated on a graph, the answers are source-target pairs of vertices that are connected by a path that conforms to the query's regular expression. This allows users to inspect complex connections in graphs by enabling them to form queries that match arbitrarily long paths. An important generalization of RPQs are the  \e{Conjunctive Regular Path Queries} (CRPQs) that extend regular path queries by allowing a conjunction of atoms where each atom is an RPQ that should hold between two specified variables~\cite{DBLP:conf/pods/ConsensM90, DBLP:conf/sigmod/CruzMW87}.

Being simple and expressive, such queries are an integral part of popular graph query languages for graphs, such as GraphLog, Cypher, XPath, and SPARQL. Therefore, they motivate and give rise to much research effort, including the study of some natural computational problems and variations thereof~\cite{DBLP:conf/pods/FlorescuLS98, DBLP:journals/siamcomp/MendelzonW95, DBLP:conf/icdt/MartensT18, DBLP:journals/tods/MartensT19}: 
What is the complexity of deciding whether an RPQ $r$ matches a path from $s$ to $t$ in $G$ (what we refer to as \e{Boolean query evaluation})?
Can we efficiently count and enumerate these paths?
Is a given CRPQ contained in another given CRPQ?
Importantly, the combined complexity of Boolean query evaluation is in $\ptime$ for RPQs, while for CRPQs it is $\np$-complete \cite{barcelo2012expressive}.
Data complexity, however, is $\nl$-complete for both~\cite{DBLP:conf/pods/Baeza13}.  The containment problem for RPQs is $\pspace$-complete, and for CRPQs, it is $\exp$-hard~\cite{DBLP:conf/pods/FlorescuLS98, DBLP:conf/kr/CalvaneseGLV00}.


In this paper, we focus on the problem of \e{quantifying the responsibility and contribution} of different components in the graph, namely edges and vertices, to an answer of the CRPQ (and RPQ in particular). 
This problem has been studied in the context of queries on relational databases, and our motivation here is the same as in the relational context: we wish to provide the database user with an explanation of \e{why} (or what in the database led to that) we got a specific answer; when many combinations of data items can lead to an answer, and the lineage is too large or complex, we wish to quantify the contribution of individual items in order to distinguish between the more and the less important to the answer~\cite{DBLP:journals/corr/abs-2112-08874}. 

In the relational model, several definitions and frameworks have been proposed for measuring the contribution of a tuple. For example, Meliou et al.~\cite{DBLP:journals/pvldb/MeliouGMS11} defined the responsibility of a fact $f$ as, roughly, the inverse of the minimal number of facts needed to be removed to make $f$ counterfactual (i.e., the query answer is determined by the existence of $f$); this measure is an adaptation of earlier notions of formal causality by Halpern and Pearl~\cite{DBLP:conf/uai/HalpernP01}.
\e{Causal effect} is another alternative measure proposed by Salimi et al.~\cite{DBLP:conf/tapp/SalimiBSB16}:
if the database is probabilistic and each fact has independently the probability $1/2$ of existence, how does the probability of the answer change if we assume the existence or absence of $f$?  Lastly, and most relevant to our work, recent work has studied the adoption of the \e{Shapley value}---a solution concept from game theory~\cite{DBLP:journals/sigmod/LivshitsBKS21,DBLP:conf/pods/ReshefKL20,DBLP:conf/icdt/LivshitsK21,DBLP:journals/corr/abs-2112-08874}.

The Shapley value is a formula for wealth distribution in a cooperative game~\cite{shapley:book1952}.  In databases, the conceptual application is straightforward: facts are the players who play the game of answering the Boolean (or numerical) query; hence, the wealth function is the result of the query~\cite{DBLP:journals/sigmod/LivshitsBKS21}.
The Shapley value has a plethora of applications, including profit sharing between ISPs~\cite{DBLP:journals/ton/MaCLMR10}, influence measurement in social network analysis~\cite{DBLP:journals/tase/NarayanamN11}, determining the most important genes for specific body functions~\cite{moretti2007class}, and identifying key players in terrorist networks~\cite{DBLP:journals/snam/CampenHHL18}, to name a few.
Closer to databases is a recent application to model checking for measuring the influence of formula components~\cite{mascle2021responsibility}. 
As another example, in machine learning, the SHAP score~\cite{NIPS2017_7062} has been used for measuring the contribution of each feature to the prediction, and it is essentially the Shapley value with the features as players.
This value was also used for quantifying the responsibility that every fact has on the \e{inconsistency} of a knowledge base~\cite{DBLP:journals/ai/HunterK10,DBLP:conf/ijcai/YunVCB18} and a 
database~\cite{DBLP:conf/icdt/LivshitsK21}.  Yet, what limits the applicability of the Shapley value is the high computational complexity---the execution cost might grow exponentially with the number of players.  Hence, past research has been investigating islands of tractability and approximation guarantees.

\paragraph*{Contribution}
We study the complexity of computing the Shapley value of edges and vertices in the domain of graph databases, where the queries are CRPQs.  In the remainder of this section and throughout the paper, we focus on edges (and discuss vertices in Section~\ref{sec:vertices}).  As in previous work~\cite{DBLP:journals/pvldb/MeliouGMS11, DBLP:conf/icdt/LivshitsK21}, we view the graph as consisting of two types of edges: \e{endogenous} edges and \e{exogenous} edges. The endogenous edges are the ones that we consider for reasoning about the contribution, and they are 
the players of the game.
The exogenous edges serve as external knowledge: they are static items that we take for granted
(as outside, unconcerned factors, deemed not to be possible causes), and do not participate in the counterfactual game. The classification into endogenous/exogenous items is application-dependent, and may even be chosen by the user at query time.

An instance of our problem involves a query $q$ (e.g., an RPQ or a CRPQ), an input graph $G$, an answer tuple $\tup t$ of vertices of $G$, and an edge $e$ of which contribution to $t$ we seek to measure. As previously done in the context of contribution measures, we adopt the yardstick of \e{data complexity}~\cite{DBLP:conf/stoc/Vardi82} where we consider the query  $q$ as fixed. More precisely, each fixed query $q$ is associated with a distinct computational problem that takes as input $G$, $\tup t$ and $e$.

We first show that the exact computation of the Shapley value is almost always hard. Specifically, it is sufficient for the CRPQ to have a non-redundant atom (i.e., a conjunct associated with a regular language) with a word of length three or more for the computation to be \#P-hard ($\fpsharpp$-complete).
In addition, for RPQs (i.e.,  single-atom CRPQs), we complete this hardness condition to a full dichotomy by showing that
the  Shapley value can be computed in polynomial time if the language contains only words of length at most two.


Next, we study the complexity of approximation. In our context, we adopt a standard yardstick of tractable approximation, namely FPRAS (Fully Polynomial-Time Approximation Scheme). An approximation of the Shapley value of an edge to a CRPQ can be computed via a straightforward Monte-Carlo (average-over-samples) estimation of the expectation that Shapley defines. This estimation guarantees an additive (or absolute) approximation. However, we are also interested in a multiplicative (or relative) approximation.

We establish a dichotomy that classifies CRPQs into a class where there is a multiplicative FPRAS and the complementing class where there cannot be any such FPRAS under conventional complexity assumptions. Specifically, if the CRPQ contains an atom (non-redundant atom) with an infinite regular language, then (any) multiplicative approximation is intractable since it is already $\np$-complete to determine whether the Shapley value is nonzero. In every other case (assuming no redundant atoms), an additive FPRAS can also be used to obtain a multiplicative FPRAS, due to the \e{gap property}, previously established in the relational model~\cite{DBLP:conf/icdt/LivshitsK21,DBLP:conf/pods/ReshefKL20}: if the Shapley value is nonzero, it is at least the reciprocal of a polynomial. Note that this is contrasting the situation with relational conjunctive queries, where there is always a multiplicative FPRAS~\cite{DBLP:journals/sigmod/LivshitsBKS21}.

For the case of vertices, we show that the situation is very similar to what we have for edges. 
It is generally hard to compute exact values; it is sufficient for the CRPQ to have a non-redundant atom that contains a word of length four or more for the computation to be hard, while for RPQs we identify that the tractable family of queries is also tractable for vertices (yet, for vertices we do not complete a full classification). For approximation, we establish the same dichotomy as for edges.

\paragraph*{Paper organization.}
The rest of the paper is organized as follows. We introduce some basic terminology in Section~\ref{sec:prelims}. In Section~\ref{sec:shapley-edges}, we formally define how the Shapley value is applied in our setting for edges in graph databases. In Section~\ref{sec:exact}, we study the complexity of computing exact Shapley values for CRPQs, and we study approximations in Section~\ref{sec:approx}. We present the complexity results for the case when measuring contribution of vertices instead of edges in Section~\ref{sec:vertices} and outline the changes that should be made to the proofs. We conclude and discuss directions for future work in Section~\ref{sec:conclusion}.
For lack of space, some of the proofs are omitted or sketched in the body of the paper, and given fully in the Appendix.

\section{Preliminaries}
\label{sec:prelims}
We begin by setting some terminology and notation that we use throughout the paper.

\subsection{Graphs and Path Queries}
We use $\Sigma$ to denote a finite alphabet (i.e., a finite set of symbols) that is used for labeling edges of graphs. A \e{word} is a finite sequence of symbols from $\Sigma$. As usual, $\Sigmastar$ denotes the set of all words. A \e{language} $L$ is a (finite or infinite) subset of $\Sigmastar$. By a slight abuse of notation, we may identify a language $L$ with a representation of $L$ such as a regular expression or a finite-state automaton.

A \e{regular expression} is defined as follows: $\emptyset$, $\epsilon$, and $\sigma \in \Sigma$ are regular expressions denoting the empty language, empty word and symbol $\sigma$, respectively; in addition, if $r$ and $s$ are regular expressions, then $(r \mid s)$ and $(r \cdot s)$ and $(r^*)$ are also regular expressions, denoting union, concatenation and Kleene star, respectively.  We sometimes omit parentheses and dots when there is no risk of ambiguity (so we may write $rs$ instead of $(r \cdot s)$, for instance).  The language $L(r)$ that $r$ accepts (recognizes) is defined as usual. We allow the use of a special regular expression $\Sigmastar$ that accepts every word.  A \e{deterministic finite automaton} (DFA) $A$ is a tuple $(Q, \Sigma, \delta, q_0, F)$, where $Q$ is a finite set of states, $\Sigma$ is a finite alphabet, $\delta \colon Q \times \Sigma \to Q$ is the transition function, $q_0$ is the initial state, $F$ is the set of accepting states. By $\delta^*(w)$ we denote the state that the automaton reaches after reading $w$, starting from the initial state. The automaton \e{accepts} a word $w$ if $\delta^*(w) \in F$.  We again use $L(A)$ to denote the language that $A$ recognizes. (Recall that the classes of regular expressions and DFAs coincide in their expressive power.)

By a \e{graph}  we mean an edge-labeled directed graph $G = (V, E)$ where $V$ is the finite set of nodes and $E \subseteq V \times \Sigma \times V$ is the set of edges. We will consistently denote by $n$ and $m$ the number of nodes and edges, respectively; that is, $n=\abs{V}$ and $m=\abs{E}$.  A path $p$ from node $u$ to node $v$ in $G$ is a sequence $p = (v_0, a_1, v_1)(v_1, a_2, v_2) \ldots (v_{k-1}, a_k, v_k)$ of edges in $G$ such that $u = v_0$ and $v = v_k$.  By $|p|$ we denote the length $k$ of $p$, and by $\lbl(p)$ we denote the word $a_1\cdots a_k$.
If $G = (V, E)$ is a graph and $E'\subseteq E$  is a set of edges, then we denote by $G[E']$ the subgraph $G' = (V, E')$ of $G$. In other words, $G[E']$ is obtained from $G$ by removing every edge in
$E\setminus E'$.

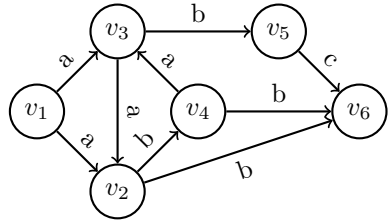
\begin{figure}[t]
\centering
\begin{tikzpicture}[node distance={15mm}, thick, main/.style = {draw, circle}] 
\node[main] (1) {$v_1$}; 
\node[main] (2) [below right of=1] {$v_2$}; 
\node[main] (3) [above right of=1] {$v_3$}; 
\node[main] (4) [above right of=2] {$v_4$}; 
\node[main] (5) [above right of=4] {$v_5$}; 
\node[main] (6) [below right of=5] {$v_6$}; 
\draw[->] (1) -- node[midway, above, sloped] {a} (2); 
\draw[->] (1) -- node[midway, above, sloped] {a} (3);
\draw[->] (3) -- node[midway, above, sloped] {a} (2);
\draw[->] (4) -- node[midway, above, sloped] {a} (3);
\draw[->] (2) -- node[midway, above, sloped] {b} (4);
\draw[->] (4) -- node[midway, above, sloped] {b} (6); 
\draw[->] (3) -- node[midway, above, sloped] {b} (5);
\draw[->] (2) -- node[midway, below, sloped] {b} (6); 
\draw[->] (5) -- node[midway, above, sloped] {c} (6); 
\end{tikzpicture} 
\caption{The graph of the running example. In the text, we denote the edge $(v_i, v_j)$ as $e_{ij}$.}
\label{fig:graph_example}
\end{figure}

A \e{path query} is a query of the form $q \coloneqq (x, L, y)$ 
where $L$ is a language.  
When evaluated on a graph $G$ it returns the set $q(G)$ of all pairs $(s, t)$ for variables $(x, y)$, such that $s$ and $t$ are nodes in graph $G$ and there exists a path $p$ from $s$ to $t$ with $\lbl(p)\in L$.  For convenience, we may view $q$ as a function that also takes as input a pair of nodes, such that $q(G,s,t)=1$ if $(G,s,t)$ is a ``yes'' instance and $q(G,s,t)=0$ otherwise.  We define similarly the special case of the queries, which we call \e{regular path queries} (RPQs), where $L$ is a regular language that is defined via a regular expression $r$ or an automaton $A$. We sometimes use the shorthand $L$ for the query $(x, L, y)$, or $r$ in the case of a regular expression.



\begin{example} \label{example:queries}
Figure \ref{fig:graph_example} depicts the graph $G$ of our running examples with labels in $\Sigma=\{a, b, c\}$. We show a few examples of path queries on the graph $G$.

\begin{itemize} 
    \item $q_1 = \Sigmastar$. This query tests whether there is a path from $s$ to $t$ in $G$. For example, we have that $q_1(G, v_1, v_2) = 1$, and $q_1(G, v_1, v_6) = 1$, as there are paths from $v_1$ to both $v_2$ and $v_6$. But $q_1(G, v_3, v_1) = 0$, as there is no path from $v_3$ to $v_1$.
    \item $q_2 = \{abc\}$. This query tests whether there is a path from $s$ to $t$ in $G$ that matches the word $abc$. For example, we have that $q_2(G, v_1, v_6) = 1$, as there is a path $v_1 \rightarrow v_3 \rightarrow v_5 \rightarrow v_6$ that matches $abc$. But we have that $q_2(G, v_3, v_5) = 0$, as the only path from $v_3$ to $v_5$ consists of a single edge labeled $b$.
    \item $q_3 = ab^*$. This query tests whether there is a path from $s$ to $t$ in $G$ that matches regular expression $ab^*$. For example, we have that $q_3(G, v_1, v_6) = 1$, as there are paths: $v_1 \rightarrow v_2 \rightarrow v_4 \rightarrow v_6$, or alternatively, $v_1 \rightarrow v_2 \rightarrow v_6$, that match $ab^*$. But we have that $q_2(G, v_3, v_5) = 0$, as the only path from $v_3$ to $v_5$ consists of a single edge with label $b$, which does not match the regular expression.
\end{itemize}
Note that $q_1$ and $q_3$ are RPQs with infinite languages, whereas $q_2$ is defined simply as a finite (singleton) language.
\qed
\end{example}

\subsubsection{Conjunctive Regular Path Queries}\label{sec:CRPQs}
A \e{conjunctive regular path query}, CRPQ for short, is a query with $k$ variables $x_1, \ldots, x_k$ that is a conjunction of atomic regular path queries between pairs of nodes that can be assigned to $x_1, \ldots, x_k$. A general CRPQ $q$ is in the form:
\begin{equation}
q[x_1, \ldots, x_k] = \bigwedge_{i=1}^m (y_i, r_i, z_i) \label{eq:crpq}    
\end{equation}
Where $y_i$ and $z_i$ are variables from $\{x_1, \ldots, x_k \}$ and $r_i$ is a regular expression. The RPQ $(y_i, r_i, z_i)$ is also referred to as the atom $i$ of $q$ and is denoted by $q_i$.
As before, when evaluated on a graph $G$, we denote by $q(G)$ all of the assignments $(u_1, \ldots, u_k)$ for $(x_1, \ldots, x_k)$, such that all atomic RPQs are true.
We might also denote the assignment $(u_1, \ldots, u_k)$ as a function $\mu:\set{x_1,\dots,x_k}\rightarrow\set{u_1,\dots,u_k}$ such that $\mu(x_i)=u_i$ for $i=1,\dots,k$.
We use a numeric notation similarly to RPQs, that is, for a particular assignment $u_1, \ldots, u_k$ we have $q[u_1, \ldots, u_k](G)=1$ if it is a ``yes'' instance and $q[u_1, \ldots, u_k](G)=0$ otherwise.

\begin{example} \label{example:crpq}
Let us look at the query $q[x_1, x_2, x_3] = (x_1, a^*, x_2) \wedge (x_2, b^*, x_3)$. When evaluated on a graph, this query returns triplets $(u_1, u_2, u_3)$ such that there is a path from $u_1$ to $u_2$ that matches $a^*$ and there is a path from $u_2$ to $u_3$ that matches $b^*$.
In our example graph (Figure~\ref{fig:graph_example}), we have that $q[v_1, v_2, v_6](G) = 1$, as there is a path $v_1 \rightarrow v_2$ that matches $a^*$, and a path $v_2 \rightarrow v_4 \rightarrow v_6$ that matches $b^*$. Yet, $q[v_1, v_3, v_6](G) = 0$, as the only path from $v_3$ to $v_6$ has label $bc$, which does not match $b^*$.
\qed
\end{example}

We say that an atom is \emph{redundant} if removing it from $q$ does not change the answers of $q$ on all graphs.
Formally, we denote by $q^{\setminus j}$ the query after removing atom $j$.
\[ q^{\setminus j}[x_1, \ldots, x_k] = \bigwedge_{i=1;i\neq j}^m (y_i, r_i, z_i) \]
Then the atom $j$ is \e{redundant} if
$ q \equiv q^{\setminus j} $, that is,
$q(G) = q^{\setminus j}(G) $ for all graphs $G$.

\begin{example} \label{example:redundant}
Let us look at $q[x_1, x_2, x_3] = (x_1, a, x_2) \wedge (x_2, b, x_3) \wedge (x_1, a^*b^*, x_3)$. In this query, the third atom is redundant according to our definition, as removing it does not change the result set on any graph. Intuitively, if the first two queries return true then so does the third, and so, the third atom does not add any restriction to the conjunction.
\qed
\end{example}

We will later refer to the following obvious (and standard) observation.
\begin{observationrep}
\label{obs:redundant}
Let $q$ be a CRPQ. If the atom $i$ is non-redundant, then there exists a graph $G$ and assignment $\mu$ to $(x_1,\dots,x_k)$ such that
$\q{j}(G, \mu(y_j) ,\mu(z_j))=1$ for $j\neq i$ and $\q{i}(G, \mu(y_i) ,\mu(z_i))=0$.
\end{observationrep}

\begin{proof}
By definition, if atom $i$ is non-redundant, it means that there exists a graph $G_i$ such that $q(G_i) \neq q^{\setminus i}(G_i)$. And since every answer tuple $t \in q(G_i)$ is also also an answer tuple for $q^{\setminus i}(G_i)$, this means that there exists an answer tuple $t'=(v_1, \ldots, v_k) \in q^{\setminus i}(G_i)$ such that $t' \notin q(G_i)$. In other words, for assignment $(v_1, \ldots, v_k)$, there are matching paths for each $\q{j}$ for $j \neq i$ $(\q{j}(G_i, s_j ,t_j)=1)$ and no matching path for $\q{i}$ $(\q{i}(G_i, s_i ,t_i)=0)$.
\end{proof}



In the sequel, we say that $q$ is \e{without redundancy} if every atom of $q$ is non-redundant. Note that every CRPQ $q$ can be made one without redundancy (while preserving equivalence) by repeatedly removing redundant atoms.

\subsection{The Shapley Value}
Let $A$ be a finite set of players. A cooperative game is a function $ v\colon P(A) \to\mathbb{R}$ such that $v(\emptyset) = 0$. The value $v(S)$ represents a value, such as wealth, jointly obtained by $S$ when the players of $S$ cooperate. The Shapley value for the player $a$ is defined to be:
\begin{equation}\label{eq:shapley}
  \shapley(A, v, a) = \frac{1}{\lvert A \rvert !} \sum_{\sigma \in \Pi_{A}} (v(\sigma_{a} \cup \{a\}) - v(\sigma_{a}))
  \end{equation}
Here, $\Pi_{A}$ is the set of all possible permutations over the players in $A$, and for each permutation $\sigma$ we denote by $\sigma_a$ the set of players that appear before $a$ in the permutation.
Alternatively, the Shapley value can be written as follows. 
\[\shapley(A, v, a) = \sum_{B \subseteq A \setminus \{a\}} \frac{\abs{B} ! (\abs{A}-\abs{B}-1)! }{\abs{A} !} (v(B \cup \{a\}) - v(B)) \]
Intuitively, the Shapley value of a player $a$ is the expected contribution of $a$ to the value $v(B)$ where $B$ is a set of players chosen by randomly (and uniformly) selecting players one by one without replacement. The Shapley value is known to be unique up to some rationality axioms that we omit here (c.f.~\cite{shapley:book1952}).

\section{The Shapley Value of Edges}
\label{sec:shapley-edges}

Throughout the paper, we focus on the Shapley value of edges of the input graph $G$. Later, in Section~\ref{sec:vertices}, we also discuss the extension of our results to the Shapley value of vertices.

Given a CRPQ $q$, our goal is to quantify the contribution of edges in the input graph $G$ to an answer $\tup u$ for $q$.  We adopt the convention that, for the sake of measuring contribution, the database is viewed as consisting of two types of data items---we reason about the contribution of the \e{endogenous} items while we take for granted the existence of the \e{exogenous} items (that serve as out-of-game background)~\cite{DBLP:conf/tapp/SalimiBSB16,DBLP:journals/sigmod/LivshitsBKS21,DBLP:journals/pvldb/MeliouGMS11}. Hence, in our setup, we view the graph as consisting of two types of edges: \e{endogenous edges} and \e{exogenous edges}.
Notationally, for a graph $G = (V, E)$ we denote by $E\endo$ and $E\exo$ the sets of endogenous and exogenous edges, respectively,
and we assume that $E$ is the disjoint union of $E\endo$ and $E\exo$.

Our goal is to quantify the contribution of an edge $e\in E\endo$ to an answer $\tup u = (u_1, \ldots u_k)$ of the query $q$, that is, to the fact that
$q[\tup u](G)=1$. To this end, we view the situation as a cooperative game where the players are the endogenous edges. The Shapley value of an edge $e \in E\endo$ in this setting will be denoted by $\pqshapleyval{q}(G,\tup u, e)$.
\[ \pqshapleyval{q}(G, \tup u, e) \eqdef \shapley(E\endo, v_{pq}, e) \] where the function $\shapley$ is as defined in Equation~\eqref{eq:shapley} and $v_{pq}$ is the numerical function that takes as input a subset of the endogenous edges and is defined as follows:

\[
v_{pq}(B) \eqdef q[\tup u] (G[B \cup E\exo])-q[\tup u](G[E\exo])
\]
In particular, $v_{pq}(\emptyset)=0$. Put differently, we have the following.
\begin{multline*}
    \pqshapleyval{q}(G,\tup u, e) = \\ \sum_{B \subseteq E\endo \setminus \{e\}} \frac{\abs{B} ! (\abs{E\endo}-\abs{B}-1)! }{\abs{E\endo} !}  \Big(q[\vec u] (G[B \cup E\exo \cup \{e\}]) - q[\tup u](G[B \cup E\exo])\Big)
\end{multline*}

For a CRPQ $q$, the computational problem $\crpqshapley{q}$ is that of computing the Shapley value of a given edge:
\smallskip
\begin{center}
\begin{tabular}{|r|p{8cm}|}\hline
  \multicolumn{2}{|c|}{Problem $\crpqshapley{q}$}\\\hline
  Input: & Graph $G$, node vector $\tup u=(u_1, \ldots, u_k)$, endogenous edge $e$\\
  Goal: & Compute $\pqshapleyval{q}(G, \tup u, e)$ \\\hline
\end{tabular}
\end{center}
\smallskip

When $q$ has only one atom, and is in fact an RPQ $(x, r, y)$ with $r$ being a regular expression,
we may replace $q$ with $r$ in the notation and write $\pqshapleyval{r}(G, s, t, e)$ and 
$\rpqshapley{r}$ with the meaning of $\pqshapleyval{q}(G, s, t, e)$ and 
$\rpqshapley{q}$, respectively.

\begin{toappendix}
Here we show some examples for the computation of the Shapley values of edges for different inputs.

\begin{example}
We refer to the graph $G$ of Figure~\ref{fig:graph_example}. By default, all edges are endogenous unless stated otherwise.
(Note that we denote by $e_{ij}$ the edge $(v_i,v_j)$.)
\begin{itemize}
\item $\pqshapleyval{abc}(G, v_1, v_6, e)$. Any edge that is not on the only path that matches $abc$, namely $p \colon v_1 \rightarrow v_3 \rightarrow v_5 \rightarrow v_6$, will have the Shapley value of zero. This is true since adding such an edge to $G[B \cup E\exo]$, where $B \subseteq E\endo$, will not change the result of the  path query. For edges on the path $ p$, the computations are similar to each other and they all have the have same Shapley value. For one of them to change the query result, it needs to appear after both other edges in the permutation of $E\endo$. This happens in $\frac{9!}{3}$ of the overall $9!$ permutations. So we have:
  \[
    \pqshapleyval{abc}(G, v_1, v_6, e_{13}) = \pqshapleyval{abc}(G, v_1, v_6, e_{35}) = \pqshapleyval{abc}(G, v_1, v_6, e_{56}) = \frac{1}{3}\,.
  \]
    
  If we assume that $e_{13}$ is exogenous, then the other two edges will \e{split} the contribution evenly. Then we get:
  \[\pqshapleyval{abc}(G, v_1, v_6, e_{35}) = \pqshapleyval{abc}(G, v_1,\allowbreak v_6, e_{56}) = \frac{1}{2}\,.\]

\item $\pqshapleyval{ab^*}(G, v_1, v_6, e)$. This is a more complicated computation, since there are two paths that match the regular expression $ab^*$ (as we have seen in Example~\ref{example:queries}).
  Again, any edge that is not on any of these paths will have a Shapley value of zero. But now, the contributions of the remaining edges is not equal since, for instance, $e_{12}$ is on both paths so we expect it to have higher contribution than the others. For the edge $e_{26}$ to change the query result, it needs to appear after edge $e_{12}$ but before at least one of $e_{24}$ and $e_{46}$. Permutations where this happens are either permutations where $e_{26}$ appears after $e_{12}$ and one of $e_{24}$ and $e_{46}$ but before the other one, and there are $2 \cdot \sum_{i=0}^{5} (i+2)!\binom{5}{i}(8-i-2)!=\frac{2}{12}\cdot 9!$ such permutations. Or permutations where $e_{26}$ appears after $e_{12}$ but before both $e_{24}$ and $e_{46}$, and there are $\sum_{i=0}^{5} (i+1)!\binom{5}{i}(8-i-1)!=\frac{1}{12} \cdot 9!$ such permutations. 
  There are an overall of $9!$ possible permutations, so,
  \[\pqshapleyval{ab^*}(G, v_1, v_6, e_{26}) = \frac{1}{4}\,.\]
  For the two edges $e_{24}$ and $e_{46}$, the computations are similar to each other.
  For one of them to change the query, it needs to appear after $e_{12}$ and the other one, but before $e_{26}$. Similar to before, there are $\sum_{i=0}^{5} (i+2)!\binom{5}{i}(8-i-2)!=\frac{1}{12}\cdot 9!$ permutations where this happens, so,
  \[\pqshapleyval{ab^*}(G, v_1, v_6, e_{24}) = \pqshapleyval{ab^*}(G, v_1, v_6, e_{46}) = \frac{1}{12}\,.\]
  For the last edge $e_{12}$, it needs to appear after $e_{26}$ or after both $e_{24}$, $e_{46}$. Permutations where this happens are either permutations where $e_{12}$ appears after both $e_{24}$ and $e_{46}$ but before $e_{26}$, after both $e_{26}$ and $e_{24}$ but before $e_{46}$, after both $e_{26}$ and $e_{46}$ but before $e_{24}$, and there are $3 \cdot \sum_{i=0}^{5} (i+2)!\binom{5}{i}(8-i-2)!=\frac{3}{12} \cdot 9!$ such permutations. Or permutations where $e_{12}$ appears after all three of $e_{26}, e_{24}, e_{46}$, and there are $\sum_{i=0}^{5} (i+3)!\binom{5}{i}(8-i-3)!=\frac{3}{12} \cdot 9!$ such permutations. Or permutations where $e_{12}$ appears after $e_{26}$ but before both $e_{24}$ and $e_{46}$, and there are $\sum_{i=0}^{5} (i+1)!\binom{5}{i}(8-i-1)!=\frac{1}{12} \cdot 9!$ such permutations. So overall we get that,
  \[\pqshapleyval{ab^*}(G, v_1, v_6, e_{12}) = \frac{7}{12}\,.\]
  Note that that $\sum_{e \in E\endo}{\pqshapleyval{ab^*}(G, v_1, v_6, e)} = 1$, which is expected since
  the sum of the Shapley values of all players is always equal to the value of the game for the whole set of players~\cite{shapley:book1952}.\qed
\end{itemize}
\end{example}
\end{toappendix}

\section{Complexity of Exact Computation}
\label{sec:exact}

In this section, we study the complexity of $\crpqshapley{q}$, where the goal is to compute the exact Shapley value of an edge.  Note that the query $q$ is fixed in the analysis, and hence, every $q$ defines a separate computational problem $\crpqshapley{q}$.

\subsection{Results}
Our main results for this section are the following theorems, showing that the $\crpqshapley{q}$ is computationally intractable for almost \e{every} CRPQ $q$, except for limited cases. 


\begin{theorem}[Hardness]\label{thm:exact_hard}
Let $q$ be a CRPQ. If $q$ has a non-redundant atom $i$ with a language that  contains a word of length three or more, then $\crpqshapley{q}$ is $\fpsharpp$-complete.
\end{theorem}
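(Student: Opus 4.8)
The plan is to prove $\fpsharpp$-completeness, with the hardness direction being the substantial part. Membership is routine: since $v_{pq}$ is monotone in the edge set, every marginal $v_{pq}(B\cup\set{e})-v_{pq}(B)$ is $0$ or $1$, and grouping permutations by coalition size gives
\[ \pqshapleyval{q}(G,\tup u,e)=\sum_{k=0}^{\abs{E\endo}-1}\frac{k!\,(\abs{E\endo}-1-k)!}{\abs{E\endo}!}\,N_k, \]
where $N_k$ is the number of subsets $B\subseteq E\endo\setminus\set{e}$ of size $k$ on which $e$ is critical. As RPQ/CRPQ evaluation is in polynomial time in data complexity, each $N_k$ is obtainable with a single $\sharpp$ oracle call (counting fixed-size subsets satisfying a polynomial-time predicate), and the weighted sum over the common denominator $\abs{E\endo}!$ is polynomial post-processing; hence $\crpqshapley{q}\in\fpsharpp$.

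For hardness I would first reduce to a single atom. Let $i$ be the non-redundant atom whose language contains a word of length at least three. By \Cref{obs:redundant} there is a graph and an assignment $\mu$ under which every atom $j\neq i$ holds while atom $i$ fails; I would realize this witness using \emph{exogenous} edges only, placing the source and target of atom $i$ at $\mu(y_i)$ and $\mu(z_i)$. Because RPQ atoms are monotone, adding endogenous edges keeps all atoms $j\neq i$ satisfied, so on every subgraph arising in the game the conjunction reduces to the truth of atom $i$. Taking $E\endo$ to be exactly the edges of a gadget inserted for atom $i$ (on fresh internal vertices, so that no spurious matches arise), we get $q[\tup u](G[E\exo])=0$ and $v_{pq}(B)=q_i(G[B\cup E\exo],\mu(y_i),\mu(z_i))$, whence $\pqshapleyval{q}(G,\tup u,e)$ equals the Shapley value of $e$ for the single RPQ $q_i$ on the gadget. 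It thus suffices to prove that $\rpqshapley{r}$ is $\fpsharpp$-hard whenever $L(r)$ contains a word $w=\sigma_1\cdots\sigma_\ell$ with $\ell\ge 3$.

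For the RPQ I would build a layered acyclic gadget in which every $s$--$t$ path spells exactly $w$, reducing from counting independent sets in a bipartite graph $H=(L,R,F)$, a classical $\sharpp$-complete problem. I add an endogenous edge $s\xrightarrow{\sigma_1}l_i$ for each $i\in L$ and an endogenous edge $r_j\xrightarrow{\sigma_3}c$ for each $j\in R$, an \emph{exogenous} edge $l_i\xrightarrow{\sigma_2}r_j$ for each $(i,j)\in F$, and a single forced exogenous chain from $c$ to $t$ spelling $\sigma_4\cdots\sigma_\ell$ (empty when $\ell=3$, i.e.\ $c=t$). Since all $s$--$t$ paths share length and labeling, no other word of $L(r)$ is ever matched, and $q_i$ is satisfied by a coalition $B$ iff $B$ contains both endpoints of some edge of $F$, that is, iff $B$ is \emph{not} independent in $H$. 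Finally I add one further endogenous edge $e$ that completes an otherwise fully present parallel $w$-path through fresh vertices; then $e$ is critical at $B$ exactly when $v_{pq}(B)=0$, i.e.\ exactly when $B$ is an independent set of $H$. Hence $N_k$ is the number of independent sets of size $k$ in $H$, and $\sum_k N_k$ is the ($\sharpp$-hard) number of independent sets of $H$.

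The crux, and the step I expect to be the main obstacle, is extracting the unweighted total $\sum_k N_k$ from Shapley values, since a single value is the size-weighted combination displayed above and the binomial weights do not cancel. I would resolve this by interpolation over a polynomial family of instances: padding the game with $t$ \emph{null} endogenous edges (isolated edges on fresh vertices that never lie on any path) leaves the criticality of $e$ on the real players intact but convolves the size-$k$ counts with binomial coefficients in $t$. Consequently the Shapley value of $e$ on the padded instance is a fixed rational expression in $t$ whose numerator, after clearing a common factorial denominator, is a polynomial in $t$ of polynomially bounded degree; evaluating it for enough values of $t$ yields a Vandermonde-type linear system from which I recover every $N_k$, and hence the number of independent sets of $H$. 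The delicate points are establishing invertibility of this system and verifying that the padding neither creates nor destroys matching paths; the combinatorial \emph{gap} and counting identities developed for the relational Shapley value~\cite{DBLP:journals/sigmod/LivshitsBKS21,DBLP:conf/icdt/LivshitsK21} provide the template for this bookkeeping.
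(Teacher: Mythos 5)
Your proposal is correct in outline and shares the paper's overall architecture: membership via a counting oracle over fixed-size critical coalitions, a lift from the single non-redundant atom to the full CRPQ by realizing the witness of Observation~\ref{obs:redundant} with exogenous edges (so that monotonicity keeps all other atoms satisfied throughout the game), and a three-layer leveled gadget with an exogenous suffix chain for the word of length $\ge 3$. Where you genuinely diverge is in the base hardness lemma. The paper (Lemma~\ref{lemma:sigma123}) reduces from the $\fpsharpp$-hardness of the Shapley value of facts for the relational query $\qrst$, imported as a black box from \cite{DBLP:journals/sigmod/LivshitsBKS21}; the reduction there is a direct game isomorphism (facts $\leftrightarrow$ edges), so the Shapley values transfer with no further work. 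You instead reduce from counting independent sets in a bipartite graph and must therefore recover the unweighted total $\sum_k N_k$ from the size-weighted Shapley sum, which forces you to redo the dummy-player padding and Vandermonde-style interpolation argument inline. That argument does go through (it is precisely the machinery used to prove the relational result you would otherwise cite), and your gadget --- endogenous first and third layers encoding the two vertex classes, exogenous middle layer encoding the bipartite edges, plus one extra endogenous edge on a parallel $w$-path to flip criticality onto the independent sets --- is essentially the composition of the paper's reduction with the one inside \cite{DBLP:journals/sigmod/LivshitsBKS21}. What your route buys is self-containment; what it costs is that the invertibility of the interpolation system, which you explicitly defer, is the entire technical burden, whereas the paper's route discharges it by citation. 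One small point to make explicit in either route: after merging the gadget's endpoints with $\mu(y_i)$ and $\mu(z_i)$, you should argue that no path matching $r_i$ can weave between gadget edges and witness-graph edges; fresh internal vertices handle the gadget side, but you still need that the witness graph contributes no $r_i$-path between the merged endpoints, which is exactly what Observation~\ref{obs:redundant} provides.
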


Recall that $\fpsharpp$ is the class of functions computable in polynomial time with an oracle to a problem in $\sharpp$ (e.g., counting the number of satisfying assignments of a propositional formula).  This class is considered intractable, and above the polynomial hierarchy (Toda's theorem~\cite{DBLP:journals/siamcomp/Toda91}).

The question of whether the condition of Theorem~\ref{thm:exact_hard} is necessary for hardness remains open. Yet, we can show that it is, indeed, necessary, in the case of a single atom (RPQ):

\begin{theoremrep}[Tractability]\label{thm:exact_tract}
Let $q$ be an RPQ with the regular expression $r$. If every word in $L(r)$ is of length at most two, then $\rpqshapley{q}$ is solvable in polynomial time.
\end{theoremrep}

\begin{proof}[Proof of Theorem~\ref{thm:exact_tract}]
As mentioned before, 
the problem of computing Shapley values in our setting reduces to computing $\abs{\M(G, s, t, L, k)}$.
We will now show that 
$\abs{\M(G, s, t, L, k)}$ can be computed in polynomial time in this case. First, let us observe that we can compute  $\abs{\M(G, s, t, L, k)}$ by computing the complement set  $\abs{\overline{\M(G, s, t, L, k)}}$ which is defined similarly but for subsets where there is no path, and subtracting from the overall number of subsets of size $k$ of endogenous edges:
\[ \abs{\M(G, s, t, L, k)} = \binom{m_n}{k} - \abs{\overline{\M(G, s, t, L, k)}} \]
So it suffices to show how to compute $\abs{\overline{\M(G, s, t, L, k)}}$.

Let $L = \{ w_0, \ldots , w_l\}$. For a subset of endogenous edges to be in $\overline{\M(G, s, t, L, k)}$, it should not connect, with $E\exo$, any path from $s$ to $t$ matching $w \in L$, i.e., matching one of $w_0, \ldots, w_l$. In other words, it should not connect any path of size 2 matching some $\abs{w_i}=2$, or any path of size 1 matching some $\abs{w_i}=1$.

This divides endogenous edges of the graph into 4 categories:
\begin{itemize}
    \item $\mathsf{Permitted}$. Edges which are not part of any such path.
    \item $\mathsf{OnPath2E}$. Edges which are on a path of size 2 matching some $w_i$, where both edges are endogenous.
    \item $\mathsf{OnPath2X}$. Edges which are on a path of size 2 matching some $w_i$, where one of the edges is exogenous.
    \item $\mathsf{OnPath1}$. Edges which are on a path of size 1 matching some $w_i$.
\end{itemize}

These are disjoint paths, as we assume that there are no parallel edges, and edges on a path of size 1 can not also be on paths of size 2 and vice versa.

Each subset of size $k$ in $\overline{\M(G, s, t, L, k)}$ has no edges from $\mathsf{OnPath1}$ as such edges make a path, and no edges from  $\mathsf{OnPath2X}$ as such edges connect a path along with $E\exo$, $i$ edges from $\mathsf{Permitted}$, and $k-i$ edges from $\mathsf{OnPath2}$ such that no whole path is added, meaning a maximum of one edge is added from each path of size 2. For each $i$, this translates to $\binom{\abs{\mathsf{Permitted}}}{i}$ options for edges from $\mathsf{Permitted}$, and $2{\binom{\frac{\abs{\mathsf{OnPath2}}}{2}}{k-i}}$ options for edges from $\mathsf{OnPath2}$; choosing $k-i$ paths from $\frac{\abs{\mathsf{OnPath2}}}{2}$ paths, and for each path there are 2 possibilities for edges.
All in all we get that:
\[ \abs{\overline{\M(G, s, t, L, k)}} = \sum_{i=0}^{k} 2{\binom{\abs{\mathsf{Permitted}}}{i}} {\binom{\frac{\abs{\mathsf{OnPath2}}}{2}}{k-i}} \]
Which can be computed in polynomial time.
\end{proof}

Hence, we get a full classification for RPQs:
\begin{corollary}\label{cor:exact-rpq-dichotomy}
Let $q$ be an RPQ with the regular expression $r$. Assuming $\mathrm{P}\neq\mathrm{NP}$, the following are equivalent:
\begin{enumerate}
      \item $\crpqshapley{q}$ is solvable in polynomial time.
    \item Every word in $L(r)$ is of length at most two.
\end{enumerate}
\end{corollary}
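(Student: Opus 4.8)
The plan is to obtain this dichotomy by simply combining the two preceding theorems, since the corollary is precisely the ``if and only if'' that results from pairing the tractability and hardness results once we restrict attention to a single atom. I would establish the two implications separately, and the only real work lies in matching their hypotheses to those of the corollary.

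For the direction $(2)\Rightarrow(1)$, I would invoke Theorem~\ref{thm:exact_tract} directly: if every word in $L(r)$ has length at most two, then $L(r)\subseteq\set{\epsilon}\cup\Sigma\cup\Sigma^2$ is finite (because $\Sigma$ is finite), and hence $\rpqshapley{q}$ is solvable in polynomial time. Since an RPQ is a single-atom CRPQ, the problem $\crpqshapley{q}$ coincides with $\rpqshapley{q}$, which yields~(1).

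For the converse I would argue the contrapositive $\neg(2)\Rightarrow\neg(1)$. Assume $L(r)$ contains a word of length three or more. An RPQ is a CRPQ with exactly one atom, so before applying the hardness result I must check that this lone atom is non-redundant. This holds for essentially any RPQ: removing the single atom leaves the empty conjunction, which is satisfied by \emph{every} pair of nodes, whereas $(x,r,y)$ fails on any graph containing two distinct nodes joined by no path (e.g.\ two isolated vertices $s\neq t$). Thus the single atom is non-redundant and, by hypothesis, contains a word of length at least three, so it meets the conditions of Theorem~\ref{thm:exact_hard}, giving that $\crpqshapley{q}$ is $\fpsharpp$-complete.

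It remains to convert this $\fpsharpp$-hardness into non-tractability under the stated assumption $\mathrm{P}\neq\mathrm{NP}$, and this is the one genuinely nontrivial point: $\fpsharpp$-completeness rules out a polynomial-time algorithm only under $\mathrm{FP}\neq\fpsharpp$, which looks formally weaker than $\mathrm{P}\neq\mathrm{NP}$. I would close the gap with Toda's theorem, $\mathrm{PH}\subseteq\mathrm{P}^{\sharpp}$: if $\crpqshapley{q}\in\mathrm{FP}$, then $\mathrm{FP}=\fpsharpp$ and the entire polynomial hierarchy collapses into $\mathrm{P}$, forcing $\mathrm{P}=\mathrm{NP}$. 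Contrapositively, $\mathrm{P}\neq\mathrm{NP}$ implies $\crpqshapley{q}\notin\mathrm{FP}$, i.e.\ $\neg(1)$. The main obstacle is therefore not any new combinatorics but these two bookkeeping steps---confirming non-redundancy of the single atom so that Theorem~\ref{thm:exact_hard} applies, and invoking Toda's theorem to align the $\fpsharpp$-hardness with the $\mathrm{P}\neq\mathrm{NP}$ hypothesis of the corollary.
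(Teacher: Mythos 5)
Your proof is correct and follows the paper's intended route: the corollary is presented as an immediate consequence of Theorems~\ref{thm:exact_hard} and~\ref{thm:exact_tract}, and your two bookkeeping steps (non-redundancy of the single atom, so that the hardness theorem applies, and converting $\fpsharpp$-hardness into intractability under $\mathrm{P}\neq\mathrm{NP}$) are precisely the details the paper leaves implicit. The only remark is that Toda's theorem is stronger than needed for the last step, since $\mathrm{NP}\subseteq\mathrm{P}^{\sharpp}$ holds trivially (a $\sharpp$ oracle can count satisfying assignments), so $\mathrm{FP}=\fpsharpp$ already yields $\mathrm{P}=\mathrm{NP}$ directly.
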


Next, in Section~\ref{sec:exact_hard} we prove Theorem~\ref{thm:exact_hard}, and in Section~\ref{sec:exact_tract} we prove Theorem~\ref{thm:exact_tract}.

\subsection{Proof of Hardness}
\label{sec:exact_hard}

Membership in $\fpsharpp$ is straightforward from the definition of the Shapley value in Equation~\eqref{eq:shapley}. Indeed,  $\pqshapleyval{q}(G, \tup u, e)$ can be computed using an oracle to the problem of counting the number of permutations over the edge set such that $e$ changes the query from zero (false) to one (true). For the $\fpsharpp$-hardness, we prove it in a sequence of reductions. We begin with hardness for the special case where the language consists of a single three-letter word. For that, we will use a result by Livshits et al.~\cite{DBLP:journals/sigmod/LivshitsBKS21} on the computation of Shapley values for facts in relational databases. We use that to prove hardness for the general case of a regular language (or any language) with one or more words of length at least three, even when restricted to simple graphs.

We first recall the result of Livshits et al.~\cite{DBLP:journals/sigmod/LivshitsBKS21}. They considered relational databases $D$ where some of the facts are endogenous and the rest exogenous. As in our notation, the corresponding subsets of $D$ are denoted by
$D\endo$ and $D\exo$, respectively.
For a Boolean query $q$ that maps every database into $\set{0,1}$, they defined the Shapley value of a fact similarly to the way we define the Shapley value of an edge: the endogenous facts are the players and the query is the wealth function:
$$\dbshapleyval{q}(D, f) = \shapley(D\endo, v_{db}, f)$$ where $v_{db}(E)=q(E \cup D\exo) - q(D\exo)$.  They established a complete classification of the class of conjunctive queries without self-joins into tractable and intractable queries for the computation of the Shapley value. What is relevant to us is that the following conjunctive query is $\fpsharpp$-hard:
\[\qrst()\colon\, \exists{x,y} [R(x)\land S(x,y)\land T(y)]\]

In addition, we define a special kind of graphs that will help us in some of the proofs.
A graph $G=(V, E)$ is called a \emph{leveled graph} if there exists a split of vertices into levels $V_0, \ldots, V_k$, such that:
\begin{enumerate}
    \item The set of vertices $V$ is the disjoint union of $V_0, \ldots, V_k$.
    \item Edges are only between vertices in consecutive levels.
\end{enumerate}

\eat{
\begin{center}
\begin{tabular}{|r|l|}\hline
  \multicolumn{2}{|c|}{Problem $\dbshapley{S, q}$}\\\hline
  Input: & Database $D$ over $S$, endogenous fact $f$\\
  Goal: & Compute the Shapley value of $f$ for $q(D)$\\\hline
\end{tabular}
\end{center}
}



\usetikzlibrary{positioning, calc}

\tikzstyle{leveled_node}=[draw, circle, minimum size=15pt,inner sep=1pt, scale=0.9]
\begin{figure}[t]
\begin{subfigure}[b]{0.45\textwidth}
\centering
\begin{minipage}[b]{0.3\linewidth}\centering
\begin{tabular}{ |c|  }
\hline
\rowcolor{lightgray}\multicolumn{1}{|c|}{R} \\
\hline
a \\
b \\
c \\
\hline
\end{tabular}
\end{minipage}
\begin{minipage}[b]{0.3\linewidth}\centering
\begin{tabular}{ |c|c|  }
\hline
\rowcolor{lightgray}\multicolumn{2}{|c|}{S} \\
\hline
a & c \\
a & d \\
b & c \\
d & c \\
\hline
\end{tabular}
\end{minipage}
\begin{minipage}[b]{0.3\linewidth}\centering
\begin{tabular}{ |c|  }
\hline
\rowcolor{lightgray}\multicolumn{1}{|c|}{T} \\
\hline
a \\
c \\
d \\
\hline
\end{tabular}
\end{minipage}
\caption{Input database $D$.}
\end{subfigure}
\begin{subfigure}[b]{0.45\textwidth}
\centering
\begin{tikzpicture}[node distance={10mm}, thick, main/.style = {draw, circle}] 

\node[leveled_node] (a1) {$a_1$};

\node[leveled_node] (b1) [below of=a1]{$b_1$};
\node[leveled_node] (c1) [below of=b1]{$c_1$};
\node[leveled_node] (d1) [below of=c1]{$d_1$};

\coordinate (Middle1) at ($(b1)!0.5!(c1)$);

\node[leveled_node] (s0) [left of=Middle1]{$s_0$};

\node[leveled_node] (a2) [right of=a1]{$a_2$};
\node[leveled_node] (b2) [below of=a2]{$b_2$};
\node[leveled_node] (c2) [below of=b2]{$c_2$};
\node[leveled_node] (d2) [below of=c2]{$d_2$};

\coordinate (Middle2) at ($(b2)!0.5!(c2)$);

\node[leveled_node] (s3) [right of=Middle2]{$s_3$};



\draw[->] (s0) -- node[midway, above, sloped] {$\sigma_1$} (a1); 
\draw[->] (s0) -- (b1);
\draw[->] (s0) -- (c1);

\draw[->] (a1) -- node[near start, above, sloped] {$\sigma_2$} (c2);
\draw[->] (a1) -- (d2);
\draw[->] (b1) -- (c2);
\draw[->] (d1) -- (c2);

\draw[->] (c2) -- (s3);
\draw[->] (d2) -- (s3);
\draw[->] (a2) -- node[near start, above, sloped] {$\sigma_3$} (s3);




\end{tikzpicture} 
\caption{Reduction graph $G$.}
\end{subfigure}

\caption{An example for the construction in the reduction of the proof of Lemma \ref{lemma:sigma123}.}
\label{fig:abc_reduction}
\end{figure}

From the hardness of the Shapley value for 
$\qrst$ is is easy to prove the following. 
\begin{lemmarep}\label{lemma:sigma123}
Let $\sigma_i\in\Sigma$ for $i=1,2,3$.  $\rpqshapley{\sigma_1 \sigma_2 \sigma_3}$ is  $\fpsharpp$-hard, even when restricted to leveled graphs.
\end{lemmarep}
\begin{proof}
The proof is by reduction from the problem of computing $\dbshapleyval{\qrst}(D, f)$,
where $D$ is a database over scheme $\mathsf{S}=\{R(x), S(x, y), T(x)\}$.
Since, as mentioned earlier, this problem is $\fpsharpp$-hard.
Given a database $D$ over $\mathsf{S}$, we will construct a leveled graph database $G=(V, E)$ over $\Sigma=\{\sigma_1, \sigma_2, \sigma_3\}$.
We will map facts to edges using the following one-to-one mapping function (bijection), the matching edge for fact $f$ will be denoted by $e_f$: 
For each fact $R(x)$ we will add an edge 
$(s_0, \sigma_1, x_1)$.
For each fact $S(x, y)$ in we will add an edge 
$(x_1, \sigma_2, y_2)$.
For each fact $T(y)$, we will add an edge 
$(y_2, \sigma_3, t_3)$.
It can be seen that this graph is a three-level graph. It is constructed such that if the query is satisfied in database $D$, then there is a path from $s_0$ to $t_3$ matching $\sigma_1 \sigma_2 \sigma_3$. An edge $e_f$ is exogenous/endogenous according to the classification of the fact $f$.
To complete the proof, we will show that:
\[ \dbshapleyval{\qrst}(D, f) = \pqshapleyval{\sigma_1 \sigma_2 \sigma_3}(G, s_0, t_3, e_f) \]


We will prove this by showing that both games are isomorphic. As stated before, the mapping function is a bijection and there is a one-to-one mapping between facts and edges, now we will show that the wealth functions get the same values for each subset $F$ of facts and its mapping to edges $E$ in the reduction graph $G$, i.e., $v_{db}(F) = v_{pq}(E)$. We denote by $\q{\sigma_1 \sigma_2 \sigma_3}$ the RPQ with regular expression $\sigma_1 \sigma_2 \sigma_3$.

Since $v_{db}(F) = \qrst(F \cup D\exo) - \qrst(D\exo)$ and $v_{pq}(E) = \q{\sigma_1 \sigma_2 \sigma_3} (G[E \cup E\exo], s_0, t_3)-\q{\sigma_1 \sigma_2 \sigma_3}(G[E\exo], s_0, t_3)$, it is sufficient to show that $\qrst(F') = \q{\sigma_1 \sigma_2 \sigma_3}(G[E'], s_0, t_3)$ for a subset $F'$ of facts and its mapping to edges $E'$. 
\begin{align*} 
\qrst(F') = 1 & \underset{(1)}{\iff} F' \models R(x), S(x, y), T(y) \\
& \underset{(2)}{\iff} \exists a, b : R(a), S(a, b), T(b) \in F' \\
& \underset{(3)}{\iff} (s_0, \sigma_1, a_1), (a_1, \sigma_2, b_2), (b_2, \sigma_3, t_3) \in E' \\
& \underset{(4)}{\iff} \q{\sigma_1 \sigma_2 \sigma_3}(G[E'], s_0, t_3) = 1 
\end{align*}

\sloppy
Transitions (1) and (2) follow from definitions, transition (3) follows from the construction of the reduction graph $G$, and transition (4) follows from that if $(s_0, \sigma_1, x_1), (x_1, \sigma_2, y_2), (y_2, \sigma_3, t_3) \in E$ then these edges connect a path from $s_0$ to $t_3$ matching $\sigma_1 \sigma_2 \sigma_3$.

So as a conclusion, for any fact $f$ we can compute its Shapley value efficiently, in polynomial time, given a solution for the second problem. The construction of the graph can be done in polynomial time.
\end{proof}

\eat{
\begin{proofsketch}
  The proof is by reduction from the problem of computing $\dbshapleyval{\qrst}(D, f)$ where $D$ is a database over scheme $\mathsf{S}=\{R(x), S(x, y), T(x)\}$.  Since, as mentioned earlier, this problem is $\fpsharpp$-hard.  Given a database $D$ over $\mathsf{S}$, we construct a leveled graph $G=(V, E)$ with labels from $\{\sigma_1, \sigma_2, \sigma_3\}$.  
  We map facts to edges as follows.  For each fact $R(a)$ we add an edge $(s, \sigma_1, a)$, for each fact $S(a,b)$ we add the edge $(a, \sigma_2, b)$, and for each fact $T(b)$ we add the edge $(b, \sigma_3, t)$.  We denote by $e_f$ the edge that we add for the fact $f$.  We make $e_f$ endogenous if and only if $f$ is endogenous (and exogenous if and only if $f$ is exogenous).  
  Clearly, $e_f$ is a bijection between the
  facts of $D$ and the edges of $G$, and $G$ is a three-level graph.
  Note that for every subset $D'$ of $D$ it holds that $D'$ satisfies $\qrst$ if and only if
  the subgraph $G[E']$ includes a path from $s$ to $t$ that matches $\sigma_1 \sigma_2 \sigma_3$, where
  $E'=\set{e_f\mid f\in D'}$.  With these we show that $\dbshapleyval{\qrst}(D, f) = \pqshapleyval{\sigma_1 \sigma_2 \sigma_3}(G, s, t, e_f) $, which completes the proof. 
  \end{proofsketch}
}
The proof (given in the Appendix) is via the reduction illustrated in Figure~\ref{fig:abc_reduction}.
Next, we show the following generalization of Lemma~\ref{lemma:sigma123}:
\begin{lemmarep} 
\label{lemma:exact_hardness}
Let $r$ be a regular expression.
If there exists a word in $L(r)$ of length at least three, then $\rpqshapley{r}$ is $\fpsharpp$-hard, even when restricted to leveled graphs.
\end{lemmarep}

\begin{proof}
Let $w_0=\sigma_1 \ldots \sigma_n \in L(r)$, such that $n=\abs{w_0} \geq 3$. We will show a reduction from the problem $\rpqshapley{\sigma_1 \sigma_2 \sigma_3}$ on leveled graphs to $\rpqshapley{r}$ on leveled graphs.
Given an input graph $G$,
source node $s$, target node $t$ and edge $e$ for $\rpqshapley{\sigma_1 \sigma_2 \sigma_3}$, we construct an input graph $G'$ for $\rpqshapley{r}$. It will be constructed in the following way: 

Starting from $s$, we will keep edges in the first level with label $\sigma_1$, edges in the second level with label $\sigma_2$, and edges in the third level with label $\sigma_3$. 
These edges will be classified as exogenous/endogenous according to the classification of the original edges. All other edges will be removed. If $\abs{w} > 3$, we will also add a path consisting of exogenous edges, from node $t$ to a new node $t'$ matching the remainder of $w$, i.e., $\sigma_4 \ldots \sigma_n$. Such a reduction is illustrated in Figure~\ref{fig:leveled_reduction_example}. Now we will show that $\pqshapleyval{\sigma_1 \sigma_2 \sigma_3}(G, s, t, e) = \pqshapleyval{r}(G', s, t', e)$.

Now we can see that in the new graph $G'$ there can only be paths matching $w_0 = \sigma_1 \ldots \sigma_n$ from $s$ to $t'$; starting with a sub-path that was originally in $G$ that matches $\sigma_1 \sigma_2 \sigma_3$, and getting to $t'$ through the only path from $t$ to $t'$ that we added. In addition, any edge that was removed $e'$ has $\pqshapleyval{\sigma_1 \sigma_2 \sigma_3}(G, s, t, e') = 0$, since it does not participate in any path from $s$ to $t$ matching $\sigma_1 \sigma_2 \sigma_3$, otherwise it would have been added to $G'$ by construction. So adding such an edge to $G[B \cup E\exo]$, where $B \subseteq E\endo$, will not change the result of the  path query.

Since the sub-path that matches the suffix $\sigma_4 ... \sigma_{n}$ of $w_0$, consists only of exogenous edges and can only be from $t$ to $t'$, the existence of a path from $s$ to $t'$ matching $r$ boils down to the existence of a path from $s$ to $t$ matching $\sigma_1 \sigma_2 \sigma_3$.
Pairing that with the fact the in both graphs we have the same set of endogenous edges, except for endogenous edges in the original graph that were not added to $G'$, those edges can not be in a path from $s$ to $t$ matching $\sigma_1 \sigma_2 \sigma_3$ by construction, so they have a Shapley value of 0 and can be removed from the game without affecting values of the others, we get that $\pqshapleyval{\sigma_1 \sigma_2 \sigma_3}(G, s, t, e) = \pqshapleyval{r}(G', s, t', e)$.
\end{proof}

\tikzstyle{leveled_node}=[draw, circle, minimum size=15pt,inner sep=2pt, scale=0.9]

\tikzset{
->, 
node distance=2cm, 
}

\begin{figure}[b]

\begin{subfigure}[b]{0.45\textwidth}
\centering
\begin{tikzpicture}[node distance={12mm}, thick, main/.style = {draw, circle}] 
\node[leveled_node] (1) {$v_1$};
\node[leveled_node] (2) [below of=1]{$v_2$};

\node[leveled_node] (3) [right of=1]{$v_3$};
\node[leveled_node] (4) [below of=3]{$v_4$};
\node[leveled_node] (5) [below of=4]{$v_5$};

\node[leveled_node] (6) [right of=3]{$v_6$};
\node[leveled_node] (7) [below of=6]{$v_7$};
\node[leveled_node] (8) [below of=7]{$v_8$};

\node[leveled_node] (9) [right of=6]{$v_9$};

\draw 
(1) edge[above] node{a} (3)
(1) edge[above, midway, sloped] node{a} (4)

(2) edge[above, midway, sloped] node{a} (5)
(2) edge[above, midway, sloped] node{b} (4)

(3) edge[above, midway, sloped] node{d} (6)
(3) edge[above, midway, sloped] node{b} (7)
(3) edge[above, midway, sloped] node{b} (8)
(4) edge[above, midway, sloped] node{c} (8)
(5) edge[above, midway, sloped] node{b} (8)

(6) edge[above, midway, sloped] node{c} (9)
(7) edge[above, midway, sloped] node{e} (9)
(8) edge[above, midway, sloped] node{c} (9)
;

\end{tikzpicture} 
\caption{Input graph $G$.}
\end{subfigure}
\begin{subfigure}[b]{0.45\textwidth}
\centering
\begin{tikzpicture}[node distance={12mm}, thick, main/.style = {draw, circle}] 
\node[leveled_node] (1) {$v_1$};
\node[leveled_node] (2) [below of=1]{$v_2$};

\node[leveled_node] (3) [right of=1]{$v_3$};
\node[leveled_node] (4) [below of=3]{$v_4$};
\node[leveled_node] (5) [below of=4]{$v_5$};

\node[leveled_node] (6) [right of=3]{$v_6$};
\node[leveled_node] (7) [below of=6]{$v_7$};
\node[leveled_node] (8) [below of=7]{$v_8$};

\node[leveled_node] (9) [right of=6]{$v_9$};

\node[leveled_node] (10) [right of=9]{};

\node[leveled_node] (11) [right of=10]{$t'$};

\draw 
(1) edge[above] node{a} (3)
(1) edge[above, midway, sloped] node{a} (4)

(2) edge[above, midway, sloped] node{a} (5)

(3) edge[above, midway, sloped] node{b} (7)
(3) edge[above, midway, sloped] node{b} (8)
(5) edge[above, midway, sloped] node{b} (8)

(6) edge[above, midway, sloped] node{c} (9)
(8) edge[above, midway, sloped] node{c} (9)

(9) edge[above, midway, sloped] node{d} (10)
(10) edge[above, midway, sloped] node{e} (11)

;
\end{tikzpicture} 
\caption{Reduction graph $G'$}
\end{subfigure}

\caption{An example for the reduction in Lemma~\ref{lemma:exact_hardness}, for a regular expression that accepts the word $abcde$, source node $s=v_1$, target node $t=v_9$.}
\label{fig:leveled_reduction_example}
\end{figure}
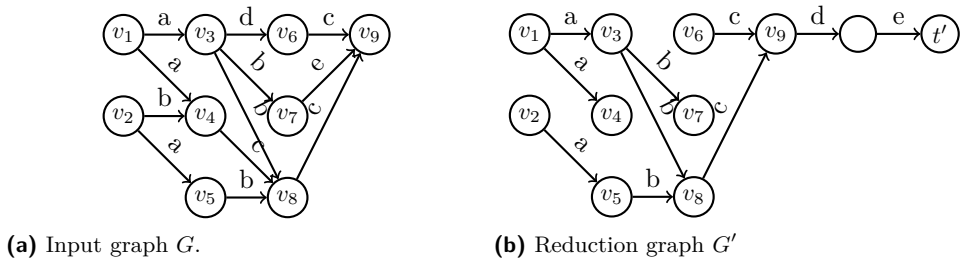

With Lemma~\ref{lemma:exact_hardness}, we can prove Theorem~\ref{thm:exact_hard}.

\begin{proof}[Proof of Theorem~\ref{thm:exact_hard}]
  We know that there exists an atom $i$ of $q$ such that is non-redundant and $L(r_i)$ contains a word of length at least three. We reduce $\rpqshapley{r_i}$ on leveled graphs (Lemma~\ref{lemma:exact_hardness}) to $\crpqshapley{q}$.  Given an input graph $G$, source node $s$, target node $t$ and edge $e$ for $\rpqshapley{r_i}$, we construct an input instance $G^*$ for $\crpqshapley{q}$. Since the atom $i$ is non-redundant, we can use Observation~\ref{obs:redundant} and conclude that there exists a graph $G_i$ and assignment $\tup v$ to $\tup x$ 
  such that all RPQ atoms return true except for the $i$th atom; that is, we have that $\q{j}(G_i, s_j ,t_j)=1$ for every $j\neq i$ and $\q{i}(G_i, s_i, t_i)=0$. Here, $s_j$ and $t_j$ are the nodes assigned to the variables $y_j$ and $z_j$, respectively, from Equation~\eqref{eq:crpq}.

We will assume that $G$ and $G_i$ are disjoint, and combine them to construct $G^*$ by merging
nodes $s$ and $t$ in $G$ with nodes $s_i$ and $t_i$ in $G_i$, respectively (the ingoing edges to one are the ingoing edges to merged node, and the same goes for outgoing edges). Edges from $G_i$ will be exogenous and edges from $G$ will be classified as exogenous or endogenous according to the original edge.



The satisfaction of $q[\tup v]$ in $G^*$ 
is determined by the satisfaction of the atom $i$ for $s_i$ and $t_i$, since $G_i$ has a matching path for the atom $j$ for all $j\neq i$.
Recall that there are no paths from $s_i$ to $t_i$ matching $r_i$ in $G_i$. Hence, from our construction of $G^*$ (and in particular given that $s_i$ and $t_i$ are not part of any cycle), we get that every path from $s_i$ to $t_i$ that matches $r_i$ should be fully contained in $G$.
We conclude that $q[\tup v]$ is true in $G^*$ if and only if the RPQ atom $i$ is true in $G$, and the same holds if we remove any set of endogenous edges from both $G$ and $G^*$.
%
%
Therefore, $\pqshapleyval{r_i}(G, s, t, e) = \pqshapleyval{q}(G^*, \tup v, e)$, as claimed.
From Lemma~\ref{lemma:exact_hardness}, we know that $\rpqshapley{r_i}$ is $\fpsharpp$-hard , and hence, $\crpqshapley{q}$ is $\fpsharpp$-hard. 
\end{proof}
This completes the proof of the hardness side of Theorem~\ref{thm:exact_hard}. 
Next, we show the tractability side.

\subsection{Proof of Tractability}\label{sec:exact_tract}
We now discuss the idea of our polynomial-time algorithm for computing $\rpqshapley{r}$ where $L=L(r)$ consists of words of length at most two. 
We denote by $\M(G, s, t, L, k)$ the set of all subsets $E'$ of 
$E\endo$ such that $G[E\exo\cup E']$ contains a path of $L$ from $s$ to $t$.
If we group subsets of edges of the same size, we can also get the following form for the Shapley value:
\begin{align*}
\rpqshapley{r}(G, s, t, e)  =  \sum_{k=0}^{\abs{E\endo}-1} & {\binom{\abs{E\endo}}{k}}^{-1} \abs{\M(G_e, s, t, L, k)}\\ &
 - \sum_{k=0}^{\abs{E\endo}-1} {\binom{\abs{E\endo}}{k}}^{-1} \abs{\M(G \setminus \{e\}, s, t, L, k)}
\end{align*}
Where $G_e$ is the same as $G$, except for $e$ that is exogenous instead of endogenous.


\sloppy
This shows that the computation of $\rpqshapley{r}(G, s, t, e)$ reduces efficiently to computing $\abs{\M(G, s, t, L, k)}$, that is, counting the subsets of $E\endo$ 
(of endogenous edges) of size $k$ that, when added to $E\exo$, connects $s$ to $t$ via a path that matches a word in $w \in L$.
An algorithm that computes $\abs{\M(G, s, t, L, k)}$ efficiently in our case is provided in the Appendix. Using the realization above and the algorithm for computing $\abs{\M(G, s, t, L, k)}$ we can prove Theorem~\ref{thm:exact_tract}.

\section{Complexity of Approximation}
\label{sec:approx}

We now study the complexity of approximating $\crpqshapley{q}$. We aim for a \e{fully polynomial randomized approximation scheme}, or FPRAS for short. Formally, an FPRAS for a numeric function $f$ is a randomized algorithm $A(x, \epsilon, \delta)$, where $x$ is an input for $f$ and $\epsilon, \delta \in (0,1)$, that returns an $\epsilon$-approximation of
$f(x)$ with probability $1 - \delta$ (where the probability is over the randomness of $A$) in time polynomial in $x$, $1/\epsilon$ and $\log(1/\delta)$. We distinguish between an \emph{additive} FPRAS:
\[ \prob\left[ f(x) - \epsilon \leq A(x, \epsilon, \delta) \leq f(x) + \epsilon \right] \geq 1 - \delta \]
and a \emph{multiplicative} FPRAS:
\[ \prob\left[ \frac{f(x)}{1+\epsilon} \leq A(x, \epsilon, \delta) \leq (1+\epsilon) f(x) \right] \geq 1 - \delta \]

\subsection{Results}
Our main result for this section is a simple Monte-Carlo based algorithm that guarantees an additive approximation for any CRPQ, that also serves as a multiplicative FPRAS in some cases that we present in a dichotomy for when a given CRPQ admits a multiplicative approximation.
We note that here and later on, we sometimes give results for general CRPQs, yet without redundancy. These results generalize to CRPQs with redundant atoms by application to any CRPQ obtained by repeatedly eliminating redundancy (as mentioned in Section~\ref{sec:CRPQs}).




\begin{theorem}\label{thm:approx}
  Let $q$ be a CRPQ without redundancy. If $L(r_i)$ is finite for every atom $i$ of $q$, then $\crpqshapley{q}$ has a multiplicative FPRAS.
  Otherwise, $\crpqshapley{q}$ has no multiplicative approximation (of any ratio) or else $\mbox{NP}\subseteq\mbox{BPP}$.
\end{theorem}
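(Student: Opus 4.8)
The plan is to prove the two directions separately, using as a common tool a Monte-Carlo estimator that already yields an \emph{additive} FPRAS for every CRPQ. First I would note that, because the query is monotone, every marginal contribution $v_{pq}(\sigma_e\cup\{e\})-v_{pq}(\sigma_e)$ appearing in Equation~\eqref{eq:shapley} lies in $\{0,1\}$, so the Shapley value itself lies in $[0,1]$. Sampling $N$ permutations of $E\endo$ uniformly at random and averaging these marginal contributions (each evaluable in polynomial time since $q$ is fixed) gives an unbiased estimator; by Hoeffding's inequality $N=O(\epsilon^{-2}\log(1/\delta))$ samples suffice for an additive $\epsilon$-approximation with confidence $1-\delta$. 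This additive FPRAS holds unconditionally and is the workhorse for the positive direction.

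For the positive direction (all $L(r_i)$ finite) I would establish a \emph{gap property}: if $\pqshapleyval{q}(G,\tup u,e)\neq 0$ then it is at least $1/\mathrm{poly}(n)$. The key observation is that when $e$ is pivotal for some coalition $B$ (so $v_{pq}(B\cup\{e\})=1$ and $v_{pq}(B)=0$), one can pick, for each atom, a single matching path; the union $W$ of the endogenous edges of these paths minus $e$ satisfies $v_{pq}(W)=0$ (by monotonicity, as $W\subseteq B$) and $v_{pq}(W\cup\{e\})=1$. Since every word has bounded length and there are finitely many atoms, $\abs{W}\le c$ for a constant $c$ depending only on $q$. Grouping the Shapley sum by coalition size, the level $k=\abs{W}$ contributes at least $\frac{k!(n-1-k)!}{n!}\ge n^{-(c+1)}$, and all other terms are nonnegative, which gives the desired polynomial gap. (The degenerate case $q[\tup u](G[E\exo])=1$ forces $v_{pq}\equiv 0$ and is detected directly.) Given the gap $g=n^{-(c+1)}$, I would run the additive FPRAS with precision $\tfrac{\epsilon}{1+\epsilon}g$: thresholding the estimate at $g/2$ lets us output $0$ exactly when the value is $0$, and otherwise returns a multiplicative $(1+\epsilon)$-approximation; the running time stays polynomial because $1/g$ is polynomial.

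For the negative direction (some non-redundant atom $i$ has $L(r_i)$ infinite) I would show that merely deciding whether $\pqshapleyval{q}(G,\tup u,e)\neq 0$ is $\np$-hard, and then observe that any multiplicative approximation $A$ satisfies $A>0\iff \pqshapleyval{q}(G,\tup u,e)>0$ (since $\tfrac{f}{1+\epsilon}\le A\le(1+\epsilon)f$ forces $A=0$ when $f=0$ and $A>0$ when $f>0$); hence a multiplicative FPRAS run with any fixed $\epsilon$ and $\delta=1/3$ would solve an $\np$-hard problem in randomized polynomial time, giving $\np\subseteq\mbox{BPP}$. For the hardness of the zero-test I would first handle a single infinite-language atom: in the instances I construct, $e$ is pivotal for some $B$ iff there is a matching path through $e$ whose other (endogenous) edges do not already form a match, which reduces to finding vertex-disjoint directed subpaths, a problem that is $\np$-hard on directed graphs. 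Infiniteness of $L(r_i)$ is exactly what makes this possible: by the pumping lemma $L(r_i)$ contains words $x y^{j} z$ of unbounded length, which I would use to let a matching path traverse an arbitrarily long encoding of the hard instance. Finally, I would lift this single-atom hardness to $q$ exactly as in the proof of Theorem~\ref{thm:exact_hard}: using Observation~\ref{obs:redundant} I would attach a fixed gadget $G_i$ satisfying all atoms $j\neq i$, so that the satisfaction of $q[\tup v]$ reduces to that of atom $i$ and the Shapley value of $e$ is preserved.

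The main obstacle I anticipate is the negative direction: correctly engineering the reduction for an \emph{arbitrary} infinite regular language, where the admissible directed paths are constrained by the word structure, and ensuring that ``Shapley value nonzero'' faithfully captures the intended directed disjoint-paths (or Hamiltonicity) instance rather than being satisfiable through spurious shorter matches. The positive direction is comparatively routine once the constant-size witness underlying the gap property is identified.
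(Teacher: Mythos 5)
Your proposal is correct and follows essentially the same route as the paper: a Monte-Carlo additive FPRAS combined with a gap property (nonzero values are at least the reciprocal of a polynomial, witnessed by a constant-size coalition built from one matching path per finite-language atom) for the positive side, and NP-hardness of the zero-test for the negative side, obtained from the simple-path/disjoint-paths problem on directed graphs, exploiting the pumping structure of the infinite language and lifting to CRPQs via the non-redundancy gadget of Observation~\ref{obs:redundant}. The only difference is one of detail: the paper pins down the negative direction by first proving the $\Sigmastar$ case via the Fortune--Hopcroft--Wyllie characterization and then subdividing edges according to a DFA cycle word, whereas you leave the reduction for an arbitrary infinite regular language at the level of a plan, which you yourself flag as the main remaining obstacle.
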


In the remainder of this section, we prove Theorem~\ref{thm:approx}, starting with the hardness side (Section~\ref{sec:approx_hard}) and  moving on to the FPRAS algorithm (Section~\ref{sec:approx_tract}).

\subsection{Proof of Hardness} \label{sec:approx_hard}

The hardness side is based on the following lemma that gives a characterization of when the Shapley value of an edge is nonzero.
\begin{lemma}\label{lemma:characterization-nonzero}
Let $G$ be a graph, $s$ and $t$ two vertices of $G$, and $e$ an endogenous edge of $G$. $\pqshapleyval{\Sigmastar}(G, s, t, e) > 0$ 
if and only if $e$ belongs to a simple path from $s$ to $t$.
\end{lemma}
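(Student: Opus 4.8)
The plan is to exploit the fact that for the query $\Sigmastar$ the value function $v_{pq}$ is monotone and $\{0,1\}$-valued: since $\Sigmastar$ only tests reachability, adding edges can never turn a ``yes'' answer into a ``no'' answer. Consequently every marginal contribution $v_{pq}(B\cup\{e\})-v_{pq}(B)$ appearing in Equation~\eqref{eq:shapley} is nonnegative, and the Shapley value is a positive combination of these terms. Hence $\pqshapleyval{\Sigmastar}(G,s,t,e)>0$ if and only if at least one marginal contribution is strictly positive, i.e.\ there is a set $B\subseteq E\endo\setminus\{e\}$ for which $s$ and $t$ are \emph{not} connected in $G[B\cup E\exo]$ but \emph{are} connected in $G[B\cup E\exo\cup\{e\}]$. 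I would prove both directions against this ``pivotal edge'' reformulation.

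For the direction from a simple path to a positive value, suppose $e=(a,b)$ lies on a simple path $p$ from $s$ to $t$, and take $B$ to be the set of endogenous edges of $p$ other than $e$. In $G[B\cup E\exo\cup\{e\}]$ the whole of $p$ is present, so $s$ and $t$ are connected; in $G[B\cup E\exo]$ the path $p$ is broken at $e$, and since $p$ is simple its two surviving fragments are vertex-disjoint apart from the shared endpoints, so $p$ no longer connects $s$ to $t$. Thus $e$ is pivotal for $B$, the marginal contribution equals $1$, and the Shapley value is strictly positive.

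For the converse, assume $\pqshapleyval{\Sigmastar}(G,s,t,e)>0$. By the reformulation there is a $B$ with $s,t$ disconnected in $G[B\cup E\exo]$ but connected in $G[B\cup E\exo\cup\{e\}]$. Take any $s$-$t$ path in the latter graph and reduce it to a simple path $p'$; were $p'$ to avoid $e$ it would already lie in $G[B\cup E\exo]$, contradicting disconnection, so $p'$ must use $e$. As $p'$ is a simple path from $s$ to $t$ through $e$ in $G$, this is exactly the required conclusion.

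The main obstacle is the first (``simple path $\Rightarrow$ positive'') direction: one must guarantee that the before-coalition is genuinely disconnected, since a priori the exogenous edges of $E\exo$ together with the surviving fragments of $p$ could reconnect $s$ to $t$ and annihilate the marginal contribution. Controlling this is the delicate point, and I would handle it by choosing the coalition to consist of \emph{exactly} the endogenous edges of $p$, so that no spurious route is introduced, and by arguing that no exogenous shortcut bypasses $e$ in the intended application of the lemma. The converse direction, by contrast, is routine once the monotonicity reformulation is in place, as it only requires extracting a simple path from an already-existing connection.
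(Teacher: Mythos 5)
Your proposal follows essentially the same route as the paper's proof: both directions reduce, via monotonicity of the reachability game, to the existence of a coalition $B$ for which $e$ is pivotal; the forward direction takes $B$ to be the endogenous edges of the simple path $p$ other than $e$; and the converse extracts a simple $s$--$t$ path through $e$ from a witnessing connection. Your converse is in fact slightly cleaner than the paper's (which splits the walk at the endpoints of $e$, simplifies each half, and then argues the concatenation is simple), and your observation that a simplified path avoiding $e$ would already lie in $G[B\cup E\exo]$ is exactly the right way to close that direction.

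The obstacle you flag in the forward direction is genuine, and you are right that it is the delicate point. With $B$ equal to the endogenous edges of $p$ minus $e$, the graph $G[B\cup E\exo]$ contains not only the two fragments of $p$ but \emph{all} exogenous edges, and these can reconnect $s$ to $t$ around $e$. Indeed the statement can fail as written: take vertices $s,a,b,t$ with endogenous edges $(s,a)$, $e=(a,b)$, $(b,t)$ and an exogenous edge $(a,t)$; then $e$ lies on a simple $s$--$t$ path, yet every coalition rich enough to complete a path through $e$ already connects $s$ to $t$ via the exogenous shortcut, so $\pqshapleyval{\Sigmastar}(G,s,t,e)=0$. Your plan to ``argue that no exogenous shortcut bypasses $e$ in the intended application'' is therefore not a cosmetic remark but the only way to salvage the claim: the lemma is really used (for Corollary~\ref{cor:sigmastar-nonzero}) on instances in which all edges are endogenous, and there your choice of $B$ works because $G[B]$ consists of exactly the two vertex-disjoint fragments of $p$. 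As stated, though, your proposal does not prove the lemma for arbitrary endogenous/exogenous partitions --- but neither does the paper's own proof, which asserts that the marginal contribution of $e$ to the edge set of $p$ minus $e$ equals $1$ without checking that $s$ and $t$ are disconnected before $e$ is added. You deserve credit for isolating the one spot where the statement needs an extra hypothesis (no exogenous $s$--$t$ bypass, e.g.\ all edges endogenous) or a restriction to the instances produced by the reduction.
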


\begin{proof}
We handle each direction separately.

$\Longleftarrow$: If $\pqshapleyval{\Sigmastar}(G, s, t, e) > 0$ then there exists some subset of edges $S$ that adding $e=(x, y)$ to it connects some path from $s$ to $t$, otherwise the marginal contribution of $e$ to all subsets of edges is zero and we get that $\pqshapleyval{\Sigmastar}(G, s, t, e) = 0$. We argue that adding $e$ connects at least one path from $s$ to $t$ that is simple. Since adding $e$ to the subset of edges $S$ connects a path from $s$ to $t$, then there already exist two sub-paths, $l_1$ from $s$ to $x$, and $l_2$ from $y$ to $t$ with all edges in $S$. If $l_1$ is not simple, we can get a simple path $l_1'$ by removing cycles from $l_1$, same applies for $l_2$. The path that combines $l_1', e, l_2'$ is a simple path from $s$ to $t$, since $l_1'$ and $l_2'$ are simple, in addition, let us assume that the path visits some vertex in $l_2'$ that it already visited in $l_1'$, then in contradiction to that $e$ has non-zero marginal contribution to $S$, we can get rid of the cycle that we have, and get a path with all edges in $S$, meaning that $e$ was not needed to connect such a path.

$\Longrightarrow$: $e$ lies on a simple path $l$ from $s$ to $t$ in $G$. If we look at $L$ the set of edges in path $l$ not including $e$ then adding $e$ to that subset of edges connects a path from $s$ to $t$, that path matches some word $w \in \Sigmastar$. So the marginal contribution of $e$ to that subset is $1$ and that means that $\pqshapleyval{\Sigmastar}(G, s, t, e) > 0$.
\end{proof}
A direct consequence of the characterization of Fortune, Hopcroft and Wyllie~\cite{DBLP:journals/tcs/FortuneHW80} of the subgraph homeomorphism problem is that the graph problem of Lemma~\ref{lemma:characterization-nonzero}
is $\np$-complete.

\begin{lemmarep}
\label{lemma:simplepath}
It is $\np$-complete to determine, given a graph $G$, nodes $s$ and $t$, and edge $e$, whether $e$ lies on any simple path from $s$ to $t$.
\end{lemmarep} 
\begin{proof}
In~\cite{DBLP:journals/tcs/FortuneHW80}, the set of pattern graphs for which the fixed directed sub-graph homeomorphism problem is $\np$-complete is characterized. An immediate result of that is that the decision problem that takes as input, a graph $G$, source node $s$, target node $t$, and node $v$, and decides whether $v$ lies on a simple path from $s$ to $t$ in graph $G$, is $\np$-hard. We will show a simple reduction from the problem of determining whether a node lies on a simple path to the problem of whether an edge lies on a simple path. Given a graph $G$, source node $s$, target node $t$, and node $v$, we modify $G$ such that node $v$ is split into two nodes $v_{in}$ and $v_{out}$ and we connect them by an edge $(v_{in}, v_{out})$. All ingoing edges into $v$ will be added as ingoing edges to $v_{in}$, and all outgoing edges from $v$ will be added as outgoing edges from $v_{out}$, the new graph will be denote by $G'$. We now argue that $v$ lies on a simple path from $s$ to $t$ in $G$ if and only if edge $(v_{in}, v_{out})$ lies on a simple path from $s$ to $t$ in $G'$, which completes the reduction.
\end{proof}

Hence, from Lemmas~\ref{lemma:characterization-nonzero} and~\ref{lemma:simplepath} we conclude that:
\begin{corollary}\label{cor:sigmastar-nonzero}
It is $\np$-complete to determine, given $G$, $s$, $t$ and $e$, whether $\pqshapleyval{\Sigmastar}(G, s, t, e) > 0$.
\end{corollary}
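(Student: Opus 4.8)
The plan is to obtain the corollary as an immediate combination of the two preceding lemmas, which already carry all the technical weight. Lemma~\ref{lemma:characterization-nonzero} gives the structural equivalence $\pqshapleyval{\Sigmastar}(G, s, t, e) > 0$ iff $e$ lies on a simple path from $s$ to $t$, and Lemma~\ref{lemma:simplepath} establishes that deciding the latter is $\np$-complete. So the corollary should follow by transporting both membership and hardness across this equivalence.

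For membership in $\np$, I would argue directly from Lemma~\ref{lemma:characterization-nonzero}: a ``yes'' instance is precisely one in which $e$ lies on some simple path from $s$ to $t$, and such a path is a polynomial-size certificate. The verifier guesses a sequence of vertices, and checks in polynomial time that it starts at $s$, ends at $t$, uses only edges of $G$, visits pairwise-distinct vertices, and includes the edge $e$. Hence the Shapley-nonzero problem is in $\np$.

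For $\np$-hardness I would give the identity reduction from the simple-path problem of Lemma~\ref{lemma:simplepath}. Given an instance $(G, s, t, e)$ of that problem, I declare every edge of $G$ (in particular $e$) endogenous and output $(G, s, t, e)$ as an instance of the Shapley-nonzero problem. By Lemma~\ref{lemma:characterization-nonzero} the two instances share the same answer, so the $\np$-hardness of the simple-path problem transfers. The one point that must be checked for soundness is that the characterization in Lemma~\ref{lemma:characterization-nonzero} does not actually depend on how the edges are split into endogenous and exogenous: the existence of a simple path from $s$ to $t$ through $e$ is a purely structural property of $G$, so making all edges endogenous (the reduction's only freedom) is harmless and keeps the equivalence intact.

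There is essentially no genuine obstacle at the level of the corollary itself; the real work lives in the two lemmas (the combinatorial marginal-contribution argument for Lemma~\ref{lemma:characterization-nonzero} and the Fortune--Hopcroft--Wyllie subgraph-homeomorphism result underlying Lemma~\ref{lemma:simplepath}). The only thing worth stating carefully is the endogenous/exogenous independence noted above, which is what licenses the trivial reduction.
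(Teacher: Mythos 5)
Your proposal is correct and matches the paper's approach exactly: the paper derives Corollary~\ref{cor:sigmastar-nonzero} immediately from Lemma~\ref{lemma:characterization-nonzero} and Lemma~\ref{lemma:simplepath} with no further argument, and your spelled-out certificate for membership and identity reduction (with all edges declared endogenous) for hardness are just the explicit version of that one-line derivation.
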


\usetikzlibrary{automata, positioning, arrows, calc}
\tikzset{
->, 
>=stealth', 
node distance=11mm, 
every state/.style={thick, fill=gray!10, minimum size=3mm}, 
initial text=$ $, 
}
\tikzstyle{main2}=[circle,fill=black!50,minimum size=4pt,inner sep=0pt]

\begin{figure}[t]
\centering
\begin{subfigure}[b]{0.5\linewidth}
\centering
\begin{tikzpicture}
\node[state, initial] (A) {$A$};
\node[state, right of=A] (B) {$B$};
\node[state, below right of=B] (C) {$C$};
\node[state, right of=C] (E) {$E$};
\node[state, accepting, above right of=E] (D) {$D$};
\draw (A) edge[below, midway, sloped] node{a} (B)
(B) edge[below, midway, sloped] node{a} (C)
(C) edge[bend left=50, below, midway, sloped] node{b} (E)
(E) edge[bend left=50, below, midway, sloped] node{a} (C)
(E) edge[below, midway, sloped] node{c} (D)
(B) edge[bend left, below, midway, sloped] node{c} (D)
;
\end{tikzpicture}
\caption{The DFA for regular expression $a(a+b)^*c$.}
\vspace{1em}
\end{subfigure}\hfill
\begin{subfigure}[b]{0.5\linewidth}
\centering
\begin{tikzpicture}[node distance={12mm}, thick, main/.style = {draw, circle, minimum size=15pt, scale=0.8}] 
\node[main] (1) {$v_1$}; 
\node[main] (2) [below right of=1] {$v_2$}; 
\node[main] (3) [above right of=1] {$v_3$}; 
\node[main] (4) [above right of=2] {$v_4$}; 

\draw[->] (1) -- (2); 
\draw[->] (1) -- (3); 
\draw[->] (2) -- (3); 
\draw[->] (2) -- (4); 
\draw[->] (3) -- (4); 

\end{tikzpicture} 
\caption{Input graph $G$.}
\vspace{1em}
\end{subfigure}

\begin{subfigure}[b]{\linewidth}
\centering
\begin{tikzpicture}
[node distance={15mm}, thick, main/.style = {draw, circle, minimum size=15pt,inner sep=0pt, scale=0.9}] 
\node[main] (A) {$v_A$}; 
\node[main] (B) [right of=A] {$v_B$}; 
\node[main] (1) [right of=B] {$v_1'$};
\node[main] (2) [below right of=1] {$v_2'$}; 
\node[main] (3) [above right of=1] {$v_3'$};
\node[main] (4) [above right of=2] {$v_4'$}; 
\node[main] (E) [right of=4] {$v_E$}; 
\node[main] (D) [right of=E] {$v_D$}; 

\node[main2] (13) at ($(1)!0.5!(3)$) {};
\node[main2] (12) at ($(1)!0.5!(2)$) {};
\node[main2] (23) at ($(2)!0.5!(3)$) {};
\node[main2] (24) at ($(2)!0.5!(4)$) {};
\node[main2] (34) at ($(3)!0.5!(4)$) {};

\draw (A) edge[above] node{a} (B)
(B) edge[above, midway, sloped] node{a} (1)
(4) edge[above, midway, sloped] node{b} (E)
(E) edge[above, midway, sloped] node{c} (D)

(1) edge[above, midway, sloped] node{b} (13)
(13) edge[above, midway, sloped] node{a} (3)

(1) edge[below, midway, sloped] node{b} (12)
(12) edge[below, midway, sloped] node{a} (2)

(2) edge[above, midway, sloped] node{b} (23)
(23) edge[above, midway, sloped] node{a} (3)

(2) edge[below, midway, sloped] node{b} (24)
(24) edge[below, midway, sloped] node{a} (4)

(3) edge[above, midway, sloped] node{b} (34)
(34) edge[above, midway, sloped] node{a} (4)
;


\end{tikzpicture} 
\caption{The graph $G'$ of the reduction for input instance $(G, v_1, v_4, e)$.}
\end{subfigure}
\caption{An example for the construction in the reduction of the proof of Lemma \ref{lemma:hardness_approx}}.
\label{fig:reduction_example}
\end{figure}

Next, we generalize Corollary~\ref{cor:sigmastar-nonzero} from $\Sigmastar$ to any arbitrary infinite regular language $r$.

\begin{lemmarep}
\label{lemma:hardness_approx}
Let $r$ be a regular expression. If $L(r)$ is infinite, then it is $\np$-complete to determine whether $\pqshapleyval{r}(G, s, t, e) > 0$.
\end{lemmarep}

\begin{proof}
We will prove this by showing a reduction from the problem of determining whether $\pqshapleyval{\Sigmastar}(G, s, t, e) > 0$ which we showed to be $\np$-hard.
Given an input instance $(G, s, t, a)$, we will show how to construct an instance $(G', s' , t', a_k)$ for our problem such that:
\[ \pqshapleyval{\Sigmastar}(G, s, t, a) > 0 \iff \pqshapleyval{r}(G', s', t', a_k) > 0\]
Since $L(r)$ is infinite, we know that its corresponding DFA graph that we will denote by $G_{DFA}$ has at least one cycle. We find a path from an initial state to an accepting state that passes through a node $v_i$ that is part of a cycle. We will denote the path by: $l: v_0 \rightarrow \ldots \rightarrow v_i \rightarrow \ldots \rightarrow v_k$.

We assumed the node $v_i$ is part of a cycle, we denote the labels which are along the cycle starting from $v_i$ by $w_{cycle} = \sigma_0 ... \sigma_c$.
The graph $G'$ will be constructed so that it any path in it from $s'$ to $t'$ matches $r$ in the following way, containing 3 sub-graphs:
\begin{itemize}
    \item The path $s'=v_0 \rightarrow \ldots \rightarrow v_i^{in}=s$, with the same labels as in the DFA, the edge $(v_{i-1}, v_i^{in})$ will have the label of $(v_{i-1}, v_i)$. (\e{exogenous}).
    \item A copy of the graph $G$ where each edge $e$ is split into $c$ edges with labels matching $w_{cycle}$ which we will denote by $e_1 ... e_c$. The node $s=v_i^{in}$ will serve as the source $s$ in the original graph, and the node $t=v_i^{out}$ will serve as the target $t$ in the original graph. (\e{endogenous/exogenous according to original edge}).
    \item The path $t=v_i^{out} \rightarrow \ldots \rightarrow v_k$, with the same labels as in the DFA, the edge $(v_i^{out}, v_{i+1})$ will have the label of $(v_i, v_{i+1})$. (\e{exogenous}).
\end{itemize}

We will now show that for each $a \in E$ and $a_k \in E'$ that is any edge that sits on the path that replaced the edge $a$ in $G'$ this holds: 
\[ \pqshapleyval{\Sigmastar}(G, s, t, a) > 0 \iff \pqshapleyval{r}(G', s', t', a_k) > 0\]

\underline{$\Longrightarrow:$}

$\pqshapleyval{\Sigmastar}(G, s, t, a) > 0 \underset{(1)}{\Longrightarrow} \exists S \subset E\endo:\q{\Sigmastar} (G[S \cup E\exo \cup \{a\}],s,t) - \q{\Sigmastar}(G[S \cup E\exo],s,t) > 0$
$ \underset{(2)}{\Longrightarrow} $ Adding $a$ to $S \cup E\exo$ connects a path from $s$ to $t$ in $G$ 
$\underset{(3)}{\Longrightarrow}$
Adding $a_k$ to $S' \cup E\exo'$ connects a path from $s$ to $t$ in $G'$ that matches $r$, where $S'$ is the subset containing all edges $e_i$ for each edge $e \in S$ and edges $a_i$ for $i \neq k$ $\underset{(4)}{\Longrightarrow} \exists S' \subset E\endo':\q{r}(G'[S' \cup E\exo' \cup \{a_k\}],s',t') - \q{r}(G'[S' \cup E\exo'],s',t') > 0$
$\underset{(5)}{\Longrightarrow} \pqshapleyval{r}(G', s', t', a_k) > 0$.\\

Transitions $(1), (2), (4), (5)$, follow from definitions. We need to prove transition $(3)$; 
if adding $a$ to $S \cup E\exo$ connects a path from $s$ to $t$ in $G$, then, adding $a_k$ to $S' \cup E\exo'$ connects a path from $s'$ to $t'$ in $G'$ that matches $r$, where $S'$ is the subset containing all edges $e_i$ for each edge $e \in S$ and edges $a_i$ for $i \neq k$.


First, we will need to prove that there is no path from $s'$ to $t'$ in $G'$ using only $S' \cup E\exo'$ matching $r$. Let us assume that there is, then this path contains a sub-path from $s$ to $t$ of edges $e_i$, for each such edge $e_i$, $e \in S$ by the way of reduction, so we get that there is a path from $s$ to $t$ in $G$ using $S \cup E\exo$ by contradiction to that $a$ is needed to connect such path.
Now we will prove that adding $a_k$ connects a path from $s'$ to $t'$ matching $r$. We know that adding $a$ to $S \cup E\exo$ connects a path from $s$ to $t$ in $G$ so from that we get that adding $a_k$ to $S' \cup E\exo'$ connects a path from $s$ to $t$ in $G'$ that is the same path as in $G$ but each edge is split into $c$ edges, the overall path from $s'$ to $t'$ will be composed the only path from $s'$ to $s$ and then the path we mentioned from $s$ to $t$ and finally the path from $t$ to $t'$. We will now show that this path matches $r$ by simulating a run of the DFA for $r$ and showing that it ends in an accepting state.

Starting from the initial state $v_0$, the path from $s'$ to $s$ would transition the automaton into state $v_{i}$, as this path was constructed according to the transitions of the automaton. 
Then, each pass through a collection of edges $e_k$ that match an edge $e$ in the original graph keeps the automaton in the same state, this is because going through their labels creates a cycle. Since the path from $s$ to $t$ goes through a finite number of such transitions, we get that simulating the path until $t$ keeps the automaton in state $v_{i}$.
Finally, simulating the last part is similar to the first part, and it would transition the automaton into state $v_k$ which is an accepting state.
This means that adding $e_k$ connects a path from $s'$ to $t'$ matching $r$.

\underline{$\Longleftarrow:$}

$ \pqshapleyval{r}(G', s', t', a_k) > 0
\underset{(1)}{\Longrightarrow} \exists S' \subset E\endo':\q{r}(G'[S' \cup E\exo' \cup \{a_k\}],s',t') - \q{r}(G'[S' \cup E\exo'],s',t') > 0 
\underset{(2)}{\Longrightarrow}$ Adding $a_k$ to $S' \cup E\exo'$ connects a path from $s'$ to $t'$ in $G'$ that matches $r$
$\underset{(3)}{\Longrightarrow}$ Adding $a$ to $S \cup E\exo$ connects a path from $s$ to $t$ in $G$, where $S$ is the subset containing all edges $e \in E$ s.t. $\forall i \in \{1, \ldots, c\}: e_i \in S'$
$\underset{(4)}{\Longrightarrow} \exists S \subset E\endo:\q{\Sigmastar}(G(S \cup E\exo \cup \{a\}),s,t) - \q{\Sigmastar}(G(S \cup E\exo),s,t) > 0 
\underset{(5)}{\Longrightarrow} \pqshapleyval{\Sigmastar}(G, s, t, a) > 0$.\\

Again, transitions $(1), (2), (4), (5)$, follow from definitions. We need to prove transition $(3)$; if adding $a_k$ to $S' \cup E\exo'$ connects a path from $s'$ to $t'$ in $G'$ that matches $r$, then, adding $a$ to $S \cup E\exo$ connects a path from $s$ to $t$ in $G$, where $S$ is the subset containing all edges $e \in E$ s.t. $\forall i \in \{1, ..., c\}: e_i \in S'$.


First, we will need to prove that there is no path from $s$ to $t$ in $G$ using only $S \cup E\exo$. Let us assume that there is, then as shown before there is a path from $s'$ to $t'$ in $G'$ using only $S' \cup E\exo'$ that matches $r$, by contradiction to that $a_k$ is needed to connect such path.
Now we will prove that adding $a$ connects a path from $s$ to $t$. We know that adding $a_k$ to $S' \cup E\exo'$ connects a path from $s'$ to $t'$ in $G'$ that matches $r$, specifically, it connects a sub-path from $s$ to $t$. We can see that if such sub-path passes through some $e'_{k}$, it passes through all $e'_i$ so all $e'_i \in S'$, and consequently $e' \in S$. This means that adding $a$ to $S \cup E\exo$ connects a path from $s$ to $t$.
\end{proof}

\begin{proofsketch}
It is straight-forward to show that the problem is in $\np$ for a start, as any subset of endogenous edges that adding $e$ to it connects a matching path serves as a witness and can be verified in polynomial time.
We will prove $\np$-hardness by showing a reduction from the problem of determining whether $\pqshapleyval{\Sigmastar}(G, s, t, e) > 0$ which we showed to be $\np$-complete.
Given an input instance $(G, s, t, a)$, we will show how to construct an instance $(G', s' , t', a_k)$ for our problem such that:
\[ \pqshapleyval{\Sigmastar}(G, s, t, a) > 0 \iff \pqshapleyval{r}(G', s', t', a_k) > 0\]
Since $L(r)$ is infinite, we know that its corresponding DFA graph that we will denote by $G_{DFA}$ has at least one cycle. We find a path from an initial state to an accepting state that passes through a node $v_i$ that is part of a cycle. We will denote the path by: $l: v_0 \rightarrow \ldots \rightarrow v_i \rightarrow \ldots \rightarrow v_k$.

We assumed the node $v_i$ is part of a cycle, we denote the labels which are along the cycle starting from $v_i$ by $w_{cycle} = \sigma_0 \dots \sigma_c$.
The graph $G'$ will be constructed so that it any path in it from $s'$ to $t'$ matches $r$ in the following way, containing 3 sub-graphs:
\begin{itemize}
    \item The path $s'=v_0 \rightarrow \ldots \rightarrow v_i^{in}=s$, with the same labels as in the DFA, the edge $(v_{i-1}, v_i^{in})$ will have the label of $(v_{i-1}, v_i)$. (\e{exogenous}).
    \item A copy of the graph $G$ where each edge $e$ is split into $c$ edges with labels matching $w_{cycle}$ which we will denote by $e_1 ... e_c$. The node $s=v_i^{in}$ will serve as the source $s$ in the original graph, and the node $t=v_i^{out}$ will serve as the target $t$ in the original graph. (\e{endogenous/exogenous according to original edge}).
    \item The path $t=v_i^{out} \rightarrow \ldots \rightarrow v_k$, with the same labels as in the DFA, the edge $(v_i^{out}, v_{i+1})$ will have the label of $(v_i, v_{i+1})$. (\e{exogenous}).
\end{itemize}

We complete the proof by showing that that for each $a \in E$ and $a_k \in E'$ that is any edge that sits on the path that replaced the edge $a$ in $G'$ it holds that 
$\pqshapleyval{\Sigmastar}(G, s, t, a) > 0$ if and only if $\pqshapleyval{r}(G', s', t', a_k) > 0$.
\end{proofsketch}


Finally, we extend Lemma~\ref{lemma:hardness_approx} from RPQs to CRPQs.
\begin{lemmarep}\label{lemma:approx_hardness}
  Let $q$ be a CRPQ without redundancy.
   If $L(r_i)$ is infinite for some atom $i$ of $q$, then determining whether $\pqshapleyval{q}(G, \tup u, e) > 0$ is $\np$-complete.
 \end{lemmarep}

The proof here is very similar to the proof we had for Theorem~\ref{thm:exact_hard}, and is provided in the Appendix for completeness.
 
\begin{proof}
It is straightforward to show that the problem is in $\np$.
For hardness, we know that there exists an atom $i$ of $q$ such that it is non-redundant and $L(r_i)$ is infinite. We will show a reduction from the problem of deciding whether $\pqshapleyval{r_i}(G, s, t, e) > 0$ which we showed to be $\np$-hard and that proves the lemma.

Given an input graph $G$, source node $s$, target node $t$ and edge $a$, the reduction would work the same way as in the proof of Theorem~\ref{thm:exact_hard} and we construct a $G^*$ that is a union of the graph $G_i$ that we get from that atom $i$ is non-redundant along with an assignment $\tup v$ (Observation~\ref{obs:redundant}), and the original graph $G$. 
The same claims hold as in the previous proof and we get:
\[ \pqshapleyval{r_i}(G, s, t, a) = \pqshapleyval{q}(G^*, \tup v, a) \]

That concludes the proof of the reduction as given a solution for the problem of determining whether $\pqshapleyval{q}(G, \tup u, e) > 0$ we can solve the problem of determining whether $\pqshapleyval{r_i}(G, s, t, e) > 0$ in polynomial time.
\end{proof}

\eat{
\begin{theorem} 
$\rpqshapley{r}$, where $L(r)$ is infinite, does not have an FPRAS.
\end{theorem}
}

\paragraph*{Open Problem: Directed Acyclic Graphs}
It is worth noting that the proof as shown in this section, does not work when the graph is acyclic as it relies on Lemma~\ref{lemma:simplepath} as a basis. Which states that the decision problem that takes as input, a graph $G$, source node $s$, target node $t$, and edge $e$, and decides whether $e$ lies on a simple path from $s$ to $t$ in graph $G$, is $\np$-complete. While this is true in the case of a general graph $G$, it is not when the graph is acyclic, where the problem can be solved in polynomial time. This leaves the problem of whether there is a different dichotomy when restricted to DAGs open. However, for exact computation there is no change even when restricted to DAGs as the proofs work as is.

\subsection{Proof of Tractability}
\label{sec:approx_tract}

We will now show that for any query that the condition for hardness does not hold, a multiplicative FPRAS exists. We will start by showing that in this case, the gap property holds: if the Shapley value is nonzero, then it must be at least the reciprocal of a polynomial.

\begin{lemma}\label{claim:gap}
Let $q$ be a fixed CRPQ without redundancy. If $L(r_i)$ is finite for every atom $i$ of $q$, then $\pqshapleyval{q}(G, \tup u$, e) is either zero or at least $1/p(\abs{E})$.
\end{lemma}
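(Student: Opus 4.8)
The plan is to exploit two features of the value: its \emph{monotonicity} and the \emph{boundedness} of witnesses that follows from every language being finite. First, observe that if $q[\tup u]$ already holds on $G[E\exo]$, then $v_{pq}(B)=0$ for every $B\subseteq E\endo$, so every edge has value zero and we are in the ``zero'' branch; hence I may assume $q[\tup u](G[E\exo])=0$, so that $v_{pq}(B)=q[\tup u](G[B\cup E\exo])$ is $\{0,1\}$-valued and monotone in $B$ (adding edges can only create paths, never destroy them). Consequently every marginal term $v_{pq}(B\cup\{e\})-v_{pq}(B)$ in the sum defining $\pqshapleyval{q}(G,\tup u,e)$ lies in $\{0,1\}$, and the whole sum is a sum of nonnegative terms. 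So if the value is nonzero, at least one term is positive, i.e.\ there is a set $B\subseteq E\endo\setminus\{e\}$ with $v_{pq}(B\cup\{e\})-v_{pq}(B)=1$. The goal is then reduced to showing that one such positive term carries a coefficient that is at least the reciprocal of a polynomial.

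The key structural step is to bound the size of a \emph{minimal witness}. Let $c$ be the maximum length of any word in any $L(r_i)$ (finite, since each language is finite and $q$ is fixed) and let $k$ be the number of atoms; set $d\eqdef ck$, a constant depending only on $q$. I would argue that whenever $q[\tup u](G[W\cup E\exo])=1$, one can select, for each atom, a single matching path of length at most $c$, whose endogenous edges number at most $c$; taking the union over the $k$ atoms yields a witness using at most $d$ endogenous edges. Hence every minimal witness (one no proper subset of which is a witness) has at most $d$ endogenous edges. \textbf{This is the crux of the argument and the main obstacle}: it is exactly here that finiteness of the languages is indispensable, since with an infinite $L(r_i)$ matching paths may be forced to have length $\Theta(\abs{E})$, the minimal counterfactual subset can be of linear size, and the corresponding coefficient becomes exponentially small—which is precisely why the gap property fails in the infinite case.

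With the bound in hand, I would extract a small counterfactual set. Take the positive term $B$ from the first paragraph: $B\cup\{e\}$ is a witness but $B$ is not, so any minimal witness $W\subseteq B\cup\{e\}$ must contain $e$ and satisfies $\abs{W}\le d$. Setting $B'\eqdef W\setminus\{e\}$, minimality gives $v_{pq}(B')=0$ while $v_{pq}(B'\cup\{e\})=v_{pq}(W)=1$, so $B'$ is a counterfactual set of size $\abs{B'}\le d-1$. Writing $N\eqdef\abs{E\endo}$, the Shapley coefficient for a set of size $\kappa=\abs{B'}$ equals $\frac{\kappa!\,(N-\kappa-1)!}{N!}=\frac{1}{N\binom{N-1}{\kappa}}$, and since $\kappa\le d-1$ we have $\binom{N-1}{\kappa}\le N^{d-1}$. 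Because all terms are nonnegative, keeping only this single term yields
\[
\pqshapleyval{q}(G,\tup u,e)\ \ge\ \frac{1}{N\binom{N-1}{\kappa}}\ \ge\ \frac{1}{N^{d}}\ \ge\ \frac{1}{\abs{E}^{d}}\,.
\]
Taking $p(\abs{E})=\abs{E}^{d}$ (a genuine polynomial, as $d$ is fixed once $q$ is) completes the argument: the value is either zero or at least $1/p(\abs{E})$.
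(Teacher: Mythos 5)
Your proposal is correct and follows essentially the same route as the paper's proof: both observe that the game is monotone (so all marginal contributions are nonnegative), bound a minimal witness/counterfactual set by the constant $\sum_i k_i$ (the sum of the longest word lengths over the atoms, which is finite precisely because each $L(r_i)$ is finite), and then lower-bound the Shapley value by the coefficient of that single small set, which is the reciprocal of a polynomial of constant degree. Your write-up is somewhat more explicit about the monotonicity and the extraction of the minimal witness, but the substance is identical.
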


\begin{proof}

If there is no subset $S$ of $E\endo$ such that adding $e$ to it along with $E\exo$ changes the value of query $q$ from false to true, then $\pqshapleyval{q}(G, \tup u, e) = 0$. Otherwise, let $S$ be a minimal such set, 
it holds that $\abs{S} \leq k_1 + \ldots + k_m = k$, where $k_i$ is the length of the longest word in $L(r_i)$; the language for the $i$-th atom in $q$, as at worst case, the paths match the longest word for each RPQ. And since each $L(r_i)$ is finite, each $k_i$ is a finite constant. Thus, $k$ also is a finite constant.

The probability to choose a permutation $\sigma$, such that $\sigma_e$ is exactly $S \setminus \{e\}$ is $$\frac{(\abs{S}-1)!(m_n-\abs{S})!}{m_n!} \allowbreak \geq \frac{(m_n-k)!}{m_n!}\,.$$
Hence, we have
\begin{align*}
\pqshapleyval{q}(G, \tup u, e) \geq  \frac{(m_n-k))!}{m_n!}  \frac{1}{(m_n-k+1)\cdot\ldots\cdot m_n}
= \frac{1}{p(\abs{E})}
\end{align*}
since $\abs{E}=m_n + m_e$.
\end{proof}

\begin{lemma}\label{lemma:approx_tract}
Let $q$ be a CRPQ without redundancy. If $L(r_i)$ is finite for every atom $i$ of $q$, then $\crpqshapley{q}$ has both an additive and a multiplicative FPRAS.
\end{lemma}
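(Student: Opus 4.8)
The plan is to establish the additive FPRAS by a direct Monte-Carlo estimation and then bootstrap it into a multiplicative FPRAS using the gap property of Lemma~\ref{claim:gap}.

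For the additive FPRAS, recall that Equation~\eqref{eq:shapley} writes $\pqshapleyval{q}(G,\tup u, e)$ as the expectation, over a uniformly random permutation $\sigma$ of $E\endo$, of the marginal contribution $v_{pq}(\sigma_e \cup \{e\}) - v_{pq}(\sigma_e)$. Since CRPQs are monotone (adding edges can only add answers), this marginal is either $0$ or $1$; hence each draw is a Bernoulli random variable and the Shapley value is itself a probability in $[0,1]$. The estimator would sample $N$ independent permutations, and for each evaluate $q[\tup u]$ on $G[\sigma_e \cup E\exo]$ and on $G[\sigma_e \cup E\exo \cup \{e\}]$, averaging the differences. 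Each such evaluation is polynomial (indeed in $\nl$) because $q$ is fixed, so each sample costs polynomial time. By a Chernoff/Hoeffding bound, taking $N = O(\epsilon^{-2}\log(1/\delta))$ samples yields an estimate within additive $\epsilon$ of the true value with probability at least $1-\delta$. This already gives the additive FPRAS, and notably it uses no assumption on the languages.

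For the multiplicative FPRAS, I would invoke Lemma~\ref{claim:gap}: under the finiteness assumption, $\pqshapleyval{q}(G,\tup u, e)$ is either $0$ or at least $g \eqdef 1/p(\abs{E})$ for a fixed polynomial $p$. The idea is to run the additive estimator with a much smaller target error, namely $\alpha \eqdef \frac{\epsilon}{1+\epsilon}\, g$, which is computable from the input size. Since $\alpha < g/2$ for every $\epsilon \in (0,1)$, a single run separates the two regimes: if the true value is $0$ the estimate falls below $g/2$, whereas if it is at least $g$ the estimate exceeds $g/2$. The algorithm therefore thresholds at $g/2$, outputting exactly $0$ when the estimate falls below it (the only admissible multiplicative approximation of $0$) and outputting the additive estimate $A$ otherwise.

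It then remains to verify two things. First, correctness: conditioned on the additive run succeeding (probability $\geq 1-\delta$), when the value $f$ is nonzero we have $f \geq g$ and $\abs{A-f}\le \alpha \le \frac{\epsilon}{1+\epsilon} f \le \epsilon f$, which is exactly enough to give both $A \le (1+\epsilon) f$ and $A \ge f/(1+\epsilon)$; when $f=0$ the output is $0$, trivially valid. Second, efficiency: the additive estimator's sample count scales as $O(\alpha^{-2}\log(1/\delta))$, and $1/\alpha = (1+\epsilon)\,p(\abs{E})/\epsilon$ is polynomial in $\abs{E}$ and $1/\epsilon$, so the whole scheme runs in polynomial time. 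I expect the only real subtlety, the \emph{main obstacle}, to be fixing the thresholding and the error budget so that the zero case is reported exactly while the running time stays polynomial; the concentration and monotonicity arguments are routine.
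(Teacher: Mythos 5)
Your proposal is correct and follows essentially the same route as the paper: a Monte-Carlo estimator over random permutations gives the additive FPRAS via Chernoff--Hoeffding, and the gap property of Lemma~\ref{claim:gap} converts it into a multiplicative one. You simply spell out the additive-to-multiplicative conversion (choice of $\alpha$ and the threshold at $g/2$) that the paper leaves implicit.
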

\begin{proof}
Using the Chernoff-Hoeffding bound, we can get an additive FPRAS of the value $\pqshapleyval{q}(G, \tup u, e)$, by simply taking the ratio of successes over $O(\log(1/\delta)/\epsilon^2)$ trials of the following experiment:
\begin{itemize}
    \item Select a random permutation $(e_1, ..., e_{m})$ over the set of endogenous edges $E\endo$.
    \item Suppose that $e=e_i$, and let $E_{i-1}=\{e_1, ..., e_{i-1}\}$. If
    $q[\tup u](G[E_{i-1} \cup E\exo \cup \{e\}])=1$ and 
    $q[\tup u](G[E_{i-1} \cup E\exo])=0$, then report ``success,'' otherwise, report ``failure.''
\end{itemize}

Now from Lemma~\ref{claim:gap} (that the gap property holds), we can easily get that an additive FPRAS also serves as a multiplicative one.
\end{proof}

\begin{proof}[Proof of Theorem~\ref{thm:approx}]
Lemma~\ref{lemma:approx_hardness} shows the hardness side, as it implies that under conventional complexity assumptions, there is no polynomial-time multiplicative approximation when there is an atom with an infinite language (as it would allow to determine whether the Shapley value is nonzero).  Lemma~\ref{lemma:approx_tract} shows an FPRAS for the tractable case where all atom have finite languages.
\end{proof}

\section{Shapley Value of Vertices}\label{sec:vertices}
\label{chap:vertices}

In this section, we discuss the differences between the computation complexity of the Shapley value for edges and the Shapley values for vertices in the graph.
Similarly to the case for edges, given a conjunctive regular path query $q$, our goal is to quantify the contribution of vertices in the input graph $G$ to an answer of the path query. The graph consists of two types of vertices---\e{endogenous vertices}, and \e{exogenous vertices}.

Notationally, for a graph $G = (V, E)$ we denote by $V\endo$ and $V\exo$ the sets of endogenous and exogenous vertices, respectively,
and we assume that $V$ is the disjoint union of $V\endo$ and $V\exo$. We denote by $\pqshapleyval{q}(G,\tup u, v)$ the Shapley value of a vertex $v \in V\endo$.
\[ \pqshapleyval{q}(G,\tup u, v) \eqdef \shapley(V\endo, \vpqv, v) \]
Where $\vpqv$ is defined as follows:
\[
\vpqv(B) \eqdef q[\tup u] (G[B \cup V\exo])-q[\tup  u](G[V\exo])
\]

We denote by $\crpqshapleyv{q}$ and $\rpqshapleyv{r}$, The corresponding computational problems to those defined earlier for the Shapley values of edges. We now state the results we have with some notes on the changes that should be made in the proofs.

\subsection{Complexity of Exact Computation}
The hardness part is almost the same as the case of edges. We begin with hardness for the special case where the regular language (or any language) consists of a single four-letter word instead of three. For that, we use the same result by Livshits et al.~\cite{DBLP:journals/sigmod/LivshitsBKS21} on the computation of Shapley values for facts in relational databases. From this we continue the same series of reductions as done for edges to get the hardness for a general CRPQ.
The tractable part is also tractable when looking at vertices. So we have the following:

\begin{theorem}\label{thm:exact_hard_vertex}
The following hold for a CRPQ $q$. 
\begin{enumerate}
    \item 
    If $q$ has a non-redundant atom $i$ with a language that  contains a word of length four or more, then $\crpqshapley{q}$ is $\fpsharpp$-complete.
    \item If $q$ has only one atom with regular expression $r$. If every word in $L(r)$ is of length at most two, then $\crpqshapleyv{q}$ is solvable in polynomial time.
\end{enumerate}
\end{theorem}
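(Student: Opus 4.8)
The plan is to establish Theorem~\ref{thm:exact_hard_vertex} in two independent parts, mirroring the edge case but accounting for the structural difference that vertices, not edges, are now the players of the game.

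For the hardness side (part~1), I would reduce from the vertex-analogue of the relational result of Livshits et al.~\cite{DBLP:journals/sigmod/LivshitsBKS21}, namely the $\fpsharpp$-hardness of $\dbshapleyval{\qrst}$, to the single-word RPQ $\rpqshapleyv{\sigma_1\sigma_2\sigma_3\sigma_4}$ on leveled graphs. The key obstacle to overcome is \emph{why the word must have length four rather than three}. In the edge reduction of Lemma~\ref{lemma:sigma123}, each fact of $R$, $S$, $T$ mapped to a single edge, and toggling a fact toggled exactly one edge on the path $s_0 \to a_1 \to b_2 \to t_3$. When the players are vertices, the natural analogue must map each fact to a \emph{vertex} whose presence/absence controls a path. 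The difficulty is that the $R$-fact and $T$-fact correspond to the source-side and target-side endpoints, which one would like to realize as the intermediate vertices of a length-four path $s \to a \to b \to c \to t$ with fixed exogenous endpoints $s,t$: the vertex for $R(a)$ sits at the first intermediate position, the vertex for $T(c)$ at the third, and the $S$-fact is realized by the edge/middle structure connecting them. A length-three word gives only two intermediate vertices, which cannot simultaneously encode all three relation tuples with independent on/off behavior; hence four letters are needed so that the counterfactual game on vertices reproduces the game on facts. I would then lift this to an arbitrary language containing a word of length four or more exactly as in Lemma~\ref{lemma:exact_hardness}: keep the first four levels of the leveled graph matching $\sigma_1\sigma_2\sigma_3\sigma_4$, append an exogenous suffix-path matching the remainder of the word, and argue that removed vertices have Shapley value zero. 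Finally, the CRPQ step reuses the construction in the proof of Theorem~\ref{thm:exact_hard} verbatim: take the non-redundant atom $i$, invoke Observation~\ref{obs:redundant} to obtain $G_i$ and assignment $\tup v$ satisfying every atom $j \neq i$, glue $G$ and $G_i$ at the endpoints $s_i,t_i$ (making $G_i$'s vertices exogenous), and conclude $\pqshapleyval{r_i}(G,s,t,e) = \pqshapleyval{q}(G^*, \tup v, e)$ for the vertex game.

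For the tractability side (part~2), the approach is the vertex-version of the proof of Theorem~\ref{thm:exact_tract}. As in the edge case, computing $\rpqshapleyv{r}$ reduces to counting, for each $k$, the number of size-$k$ subsets of $V\endo$ that together with $V\exo$ connect a length-$\le 2$ matching path from $s$ to $t$; equivalently to counting the complementary subsets that connect no such path. I would reclassify the \emph{vertices} according to their role on the relevant length-one and length-two witnessing paths. A length-one witness $s \to t$ (via an edge whose label is in $L$) uses no intermediate vertex, so it depends only on whether $s$ and $t$ are present; a length-two witness $s \to w \to t$ depends on the intermediate vertex $w$ (and on $s,t$). Since $s$ and $t$ are either both present in every relevant subset or the answer is forced to zero, the counting reduces to a clean product-of-binomials summation over the ``permitted'' intermediate vertices and the intermediate vertices lying on two-step witnesses, computable in polynomial time.

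The main obstacle I anticipate is the first reduction of part~1: getting the arithmetic of the vertex counterfactual game to match the fact game of $\qrst$ while respecting that deleting one vertex simultaneously deletes all its incident edges. This coupling is precisely what forces the length-four construction, and verifying that the induced wealth functions $v_{db}$ and $\vpqv$ are isomorphic under the fact-to-vertex bijection (the step analogous to the biconditional chain in Lemma~\ref{lemma:sigma123}) is the delicate part; the remaining reductions and the tractability argument are routine adaptations of the already-established edge proofs.
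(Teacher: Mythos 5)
Your proposal follows the same route the paper takes: the paper itself only sketches this theorem by saying the hardness side repeats the edge reductions (Lemmas~\ref{lemma:sigma123} and~\ref{lemma:exact_hardness} and the proof of Theorem~\ref{thm:exact_hard}) starting from a single four-letter word so that the three facts of $\qrst$ map to the three intermediate \emph{vertices} of a length-four path, and that the tractability side is the vertex analogue of the counting argument for Theorem~\ref{thm:exact_tract}. Your identification of why four letters (three intermediate vertices) are needed, the exogenous suffix-path lifting, the gluing construction via Observation~\ref{obs:redundant}, and the reclassification of intermediate vertices for the counting step all match the paper's intended argument.
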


Note that in the case of vertices, we leave a gap in the classification of RPQs.  Theorem~\ref{thm:exact_hard_vertex} states that if there exists a word of length four or more, then the problem is hard, and if all words are of length at most two, then the problem is solvable in polynomial time. The case where there are words of length three but not longer remains an open problem (as opposed to the case of edges where we had a full dichotomy on RPQs due to Corollary~\ref{cor:exact-rpq-dichotomy}).



\subsection{Complexity of Approximation}
For approximation, we get the exact same dichotomy on CRPQs. We know from before that also the decision problem that decides whether a vertex $v$ lies on a simple path from $s$ to $t$ in graph $G$, is $\np$-complete. From that we get that the problem of determining whether $\pqshapleyval{\Sigmastar}(G, s, t, v) > 0$ is also $\np$-complete. From that we continue with a series of reductions that is almost identical to what we have for the case of edges.

\begin{theorem}\label{thm:approx_vertex}
  Let $q$ be a CRPQ without redundancy. If $L(r_i)$ is finite for every atom  $i$ of $q$ , then $\crpqshapleyv{q}$ has a multiplicative FPRAS.
  Otherwise, $\crpqshapleyv{q}$ has no multiplicative approximation (of any ratio) or else $\mbox{NP}\subseteq\mbox{BPP}$.
\end{theorem}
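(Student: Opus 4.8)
The final statement, Theorem~\ref{thm:approx_vertex}, is the vertex analogue of Theorem~\ref{thm:approx}, and my plan is to mirror the edge proof structure exactly, replacing edges by vertices as the players of the cooperative game. The theorem has two sides: a multiplicative FPRAS when every atom has a finite language, and intractability of any multiplicative approximation otherwise. For the tractable side, I would first establish the vertex version of the gap property (Lemma~\ref{claim:gap}): if $L(r_i)$ is finite for every atom, then a minimal set $S$ of endogenous vertices whose addition flips $q[\tup u]$ from false to true has size bounded by $k=k_1+\cdots+k_m+1$, where $k_i$ is the length of the longest word in $L(r_i)$. The only change from edges is that a witnessing path through vertices of total edge-length $\ell$ uses at most $\ell+1$ vertices, so the constant bound shifts by one but remains a fixed constant since each $L(r_i)$ is finite. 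With this bound, the same Monte-Carlo sampling argument used in Lemma~\ref{lemma:approx_tract}---sampling a random permutation of $V\endo$ and checking whether $v$ is the pivotal vertex---yields an additive FPRAS via Chernoff--Hoeffding, and the gap property upgrades it to a multiplicative one.

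For the hardness side, I would follow the chain Corollary~\ref{cor:sigmastar-nonzero} $\to$ Lemma~\ref{lemma:hardness_approx} $\to$ Lemma~\ref{lemma:approx_hardness}, rebuilt for vertices. The base case is that it is $\np$-complete to decide whether $\pqshapleyval{\Sigmastar}(G, s, t, v) > 0$; by the vertex analogue of Lemma~\ref{lemma:characterization-nonzero}, this holds iff $v$ lies on a simple path from $s$ to $t$, which is exactly the node version of the Fortune--Hopcroft--Wyllie subgraph homeomorphism characterization cited in Lemma~\ref{lemma:simplepath}. In fact the vertex case is the more direct one, since Lemma~\ref{lemma:simplepath} already reduces the node-on-simple-path problem to the edge version; here I would simply use the node problem directly and skip that reduction. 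I would then lift from $\Sigmastar$ to an arbitrary infinite $L(r)$ by the same DFA-cycle construction as in Lemma~\ref{lemma:hardness_approx}, where each vertex (rather than each edge) of the original graph $G$ is expanded into a gadget matching the cycle word $w_{\mathsf{cycle}}$, and finally extend from a single RPQ atom to a general non-redundant CRPQ exactly as in Lemma~\ref{lemma:approx_hardness}, using Observation~\ref{obs:redundant} to attach a graph $G_i$ satisfying all atoms except atom $i$.

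The main obstacle I anticipate is the vertex-splitting bookkeeping in the DFA-cycle gadget of the Lemma~\ref{lemma:hardness_approx} analogue. For edges, splitting one edge into a length-$c$ path of exogenous edges with a single endogenous companion is clean; for vertices, I must decide which vertices of the expansion gadget are endogenous so that the pivotal-vertex relation still transfers faithfully in both directions of the reduction. Concretely, a single endogenous vertex of $G$ should correspond to one designated endogenous vertex in its gadget, with all auxiliary gadget vertices exogenous, so that adding that one vertex in $G'$ reconstructs precisely the same cycle traversal that keeps the automaton in state $v_i$; I must verify that no unintended simple path from $s'$ to $t'$ can arise through partially-populated gadgets, which is the vertex-counterpart of the ``there is no matching path using only $S' \cup E\exo'$'' argument. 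Once that correspondence is pinned down, the equivalence $\pqshapleyval{\Sigmastar}(G, s, t, v) > 0 \iff \pqshapleyval{r}(G', s', t', v') > 0$ and the subsequent CRPQ extension follow by the same reasoning as for edges, and Theorem~\ref{thm:approx_vertex} follows by combining the two sides.
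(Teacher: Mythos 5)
Your proposal matches the paper's approach: the paper proves Theorem~\ref{thm:approx_vertex} by observing that the node-on-simple-path problem is already $\np$-complete by Fortune--Hopcroft--Wyllie and then rerunning the edge-case chain of reductions (and the gap-property/Monte-Carlo argument) with vertices as the players, which is exactly what you outline. Your added care about which gadget vertices are endogenous in the DFA-cycle construction is the right adaptation and is consistent with what the paper leaves implicit.
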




In conclusion, we establish that the complexity for both exact computation and approximation of the Shapley value of vertices is very similar to the case of edges.
It is generally hard to compute exact values; it is sufficient for the CRPQ to have an atom that is non-redundant and contains a word of length four or more for the computation to be hard, while for RPQs we identify that the tractable family of queries for edges is also tractable for vertices. For approximation, we show that we have an identical dichotomy for when queries admit a multiplicative FPRAS.

\section{Concluding Remarks}
\label{sec:conclusion}

This work continues the research line of responsibility and contribution in databases. We presented the graph-database perspective where the queries are (conjunctive) regular path queries, and the responsibility measure is the Shapley value.  We investigated the data complexity of the Shapley value of edges in the graph. For the exact computation, we showed that it is generally hard, while we also show a specific family of CRPQs where the computation can be done in polynomial time. While this is not a full dichotomy on CRPQs, the tractable case we showed basically defines a dichotomy on the class of RPQs. 
It remains an open problem whether the condition we have for hardness defines a full dichotomy on CRPQs.
We have also studied approximation of the Shapley values in the form of an FPRAS. An additive FPRAS is easy to achieve using Monte-Carlo sampling, while a multiplicative approximation is harder. We showed a family of CRPQs where the gap property holds, and hence, an additive FPRAS can be transformed into a multiplicative one. These are the CRPQs where every 
atom has a finite language. For the other CRPQs, we showed that it is hard to obtain any multiplicative approximation. Thus, we achieved a dichotomy on CRPQs for the case of approximation (assuming no redundant atoms).


Several problems remain open. We still do not have a full dichotomy for exact computation of Shapley values.  In addition, the proof of the hardness of approximation in Section~\ref{sec:approx_hard} is not valid when the graph in hand is acyclic; this raises the question of whether there are better opportunities of efficient approximations when the problem is restricted to acyclic graphs. Another direction is investigating richer path languages, for example, allowing existentially quantified variables in the query, or negated atoms.



\bibliography{main}

\begin{thebibliography}{10}

\bibitem{DBLP:conf/edbt/AnandBL10}
Manish~Kumar Anand, Shawn Bowers, and Bertram Lud{\"{a}}scher.
\newblock Techniques for efficiently querying scientific workflow provenance
  graphs.
\newblock In Ioana Manolescu, Stefano Spaccapietra, Jens Teubner, Masaru
  Kitsuregawa, Alain L{\'{e}}ger, Felix Naumann, Anastasia Ailamaki, and Fatma
  {\"{O}}zcan, editors, {\em {EDBT} 2010, 13th International Conference on
  Extending Database Technology, Lausanne, Switzerland, March 22-26, 2010,
  Proceedings}, volume 426 of {\em {ACM} International Conference Proceeding
  Series}, pages 287--298, None, 2010. {ACM}.
\newblock URL: \url{https://doi.org/10.1145/1739041.1739078}, \href
  {http://dx.doi.org/10.1145/1739041.1739078}
  {\path{doi:10.1145/1739041.1739078}}.

\bibitem{DBLP:conf/pods/ArenasP11}
Marcelo Arenas and Jorge P{\'{e}}rez.
\newblock Querying semantic web data with {SPARQL}.
\newblock In Maurizio Lenzerini and Thomas Schwentick, editors, {\em
  Proceedings of the 30th {ACM} {SIGMOD-SIGACT-SIGART} Symposium on Principles
  of Database Systems, {PODS} 2011, June 12-16, 2011, Athens, Greece}, pages
  305--316, None, 2011. {ACM}.
\newblock URL: \url{https://doi.org/10.1145/1989284.1989312}, \href
  {http://dx.doi.org/10.1145/1989284.1989312}
  {\path{doi:10.1145/1989284.1989312}}.

\bibitem{DBLP:conf/pods/Baeza13}
Pablo~Barcel{\'{o}} Baeza.
\newblock Querying graph databases.
\newblock In Richard Hull and Wenfei Fan, editors, {\em Proceedings of the 32nd
  {ACM} {SIGMOD-SIGACT-SIGART} Symposium on Principles of Database Systems,
  {PODS} 2013, New York, NY, {USA} - June 22 - 27, 2013}, pages 175--188, None,
  2013. {ACM}.
\newblock URL: \url{https://doi.org/10.1145/2463664.2465216}, \href
  {http://dx.doi.org/10.1145/2463664.2465216}
  {\path{doi:10.1145/2463664.2465216}}.

\bibitem{barcelo2012expressive}
Pablo Barcel{\'o}, Leonid Libkin, Anthony~W Lin, and Peter~T Wood.
\newblock Expressive languages for path queries over graph-structured data.
\newblock {\em ACM Transactions on Database Systems (TODS)}, 37(4):1--46, 2012.

\bibitem{DBLP:conf/pods/CalvaneseGLV99}
Diego Calvanese, Giuseppe~De Giacomo, Maurizio Lenzerini, and Moshe~Y. Vardi.
\newblock Rewriting of regular expressions and regular path queries.
\newblock In Victor Vianu and Christos~H. Papadimitriou, editors, {\em
  Proceedings of the Eighteenth {ACM} {SIGACT-SIGMOD-SIGART} Symposium on
  Principles of Database Systems, May 31 - June 2, 1999, Philadelphia,
  Pennsylvania, {USA}}, pages 194--204, None, 1999. {ACM} Press.
\newblock URL: \url{https://doi.org/10.1145/303976.303996}, \href
  {http://dx.doi.org/10.1145/303976.303996} {\path{doi:10.1145/303976.303996}}.

\bibitem{DBLP:conf/kr/CalvaneseGLV00}
Diego Calvanese, Giuseppe~De Giacomo, Maurizio Lenzerini, and Moshe~Y. Vardi.
\newblock Containment of conjunctive regular path queries with inverse.
\newblock In Anthony~G. Cohn, Fausto Giunchiglia, and Bart Selman, editors,
  {\em {KR} 2000, Principles of Knowledge Representation and Reasoning
  Proceedings of the Seventh International Conference, Breckenridge, Colorado,
  USA, April 11-15, 2000}, pages 176--185, None, 2000. Morgan Kaufmann.

\bibitem{DBLP:conf/pods/ConsensM90}
Mariano~P. Consens and Alberto~O. Mendelzon.
\newblock Graphlog: a visual formalism for real life recursion.
\newblock In Daniel~J. Rosenkrantz and Yehoshua Sagiv, editors, {\em
  Proceedings of the Ninth {ACM} {SIGACT-SIGMOD-SIGART} Symposium on Principles
  of Database Systems, April 2-4, 1990, Nashville, Tennessee, {USA}}, pages
  404--416, None, 1990. {ACM} Press.
\newblock URL: \url{https://doi.org/10.1145/298514.298591}, \href
  {http://dx.doi.org/10.1145/298514.298591} {\path{doi:10.1145/298514.298591}}.

\bibitem{DBLP:conf/sigmod/CruzMW87}
Isabel~F. Cruz, Alberto~O. Mendelzon, and Peter~T. Wood.
\newblock A graphical query language supporting recursion.
\newblock In Umeshwar Dayal and Irving~L. Traiger, editors, {\em Proceedings of
  the Association for Computing Machinery Special Interest Group on Management
  of Data 1987 Annual Conference, San Francisco, CA, USA, May 27-29, 1987},
  pages 323--330, None, 1987. {ACM} Press.
\newblock URL: \url{https://doi.org/10.1145/38713.38749}, \href
  {http://dx.doi.org/10.1145/38713.38749} {\path{doi:10.1145/38713.38749}}.

\bibitem{DBLP:journals/corr/abs-2112-08874}
Daniel Deutch, Nave Frost, Benny Kimelfeld, and Mika{\"{e}}l Monet.
\newblock Computing the shapley value of facts in query answering.
\newblock {\em CoRR}, abs/2112.08874, 2021.

\bibitem{DBLP:conf/icdt/Fan12}
Wenfei Fan.
\newblock Graph pattern matching revised for social network analysis.
\newblock In Alin Deutsch, editor, {\em 15th International Conference on
  Database Theory, {ICDT} '12, Berlin, Germany, March 26-29, 2012}, pages
  8--21, None, 2012. {ACM}.
\newblock URL: \url{https://doi.org/10.1145/2274576.2274578}, \href
  {http://dx.doi.org/10.1145/2274576.2274578}
  {\path{doi:10.1145/2274576.2274578}}.

\bibitem{DBLP:conf/pods/FlorescuLS98}
Daniela Florescu, Alon~Y. Levy, and Dan Suciu.
\newblock Query containment for conjunctive queries with regular expressions.
\newblock In Alberto~O. Mendelzon and Jan Paredaens, editors, {\em Proceedings
  of the Seventeenth {ACM} {SIGACT-SIGMOD-SIGART} Symposium on Principles of
  Database Systems, June 1-3, 1998, Seattle, Washington, {USA}}, pages
  139--148, None, 1998. {ACM} Press.
\newblock URL: \url{https://doi.org/10.1145/275487.275503}, \href
  {http://dx.doi.org/10.1145/275487.275503} {\path{doi:10.1145/275487.275503}}.

\bibitem{DBLP:journals/tcs/FortuneHW80}
Steven Fortune, John~E. Hopcroft, and James Wyllie.
\newblock The directed subgraph homeomorphism problem.
\newblock {\em Theor. Comput. Sci.}, 10:111--121, 1980.
\newblock URL: \url{https://doi.org/10.1016/0304-3975(80)90009-2}, \href
  {http://dx.doi.org/10.1016/0304-3975(80)90009-2}
  {\path{doi:10.1016/0304-3975(80)90009-2}}.

\bibitem{DBLP:conf/uai/HalpernP01}
Joseph~Y. Halpern and Judea Pearl.
\newblock Causes and explanations: {A} structural-model approach: Part 1:
  Causes.
\newblock In {\em UAI}, pages 194--202, None, 2001. None.

\bibitem{DBLP:journals/ai/HunterK10}
Anthony Hunter and S{\'{e}}bastien Konieczny.
\newblock On the measure of conflicts: {Shapley} inconsistency values.
\newblock {\em Artif. Intell.}, 174(14):1007--1026, 2010.

\bibitem{DBLP:journals/sigmod/LivshitsBKS21}
Ester Livshits, Leopoldo~E. Bertossi, Benny Kimelfeld, and Moshe Sebag.
\newblock Query games in databases.
\newblock {\em {SIGMOD} Rec.}, 50(1):78--85, 2021.

\bibitem{DBLP:conf/icdt/LivshitsK21}
Ester Livshits and Benny Kimelfeld.
\newblock The shapley value of inconsistency measures for functional
  dependencies.
\newblock In Ke~Yi and Zhewei Wei, editors, {\em 24th International Conference
  on Database Theory, {ICDT} 2021, March 23-26, 2021, Nicosia, Cyprus}, volume
  186 of {\em LIPIcs}, pages 15:1--15:19, None, 2021. Schloss Dagstuhl -
  Leibniz-Zentrum f{\"{u}}r Informatik.
\newblock URL: \url{https://doi.org/10.4230/LIPIcs.ICDT.2021.15}, \href
  {http://dx.doi.org/10.4230/LIPIcs.ICDT.2021.15}
  {\path{doi:10.4230/LIPIcs.ICDT.2021.15}}.

\bibitem{NIPS2017_7062}
Scott~M Lundberg and Su-In Lee.
\newblock A unified approach to interpreting model predictions.
\newblock In I.~Guyon, U.~V. Luxburg, S.~Bengio, H.~Wallach, R.~Fergus,
  S.~Vishwanathan, and R.~Garnett, editors, {\em Advances in Neural Information
  Processing Systems 30}, pages 4765--4774. Curran Associates, Inc., None,
  2017.
\newblock URL:
  \url{http://papers.nips.cc/paper/7062-a-unified-approach-to-interpreting-model-predictions.pdf}.

\bibitem{DBLP:journals/biodatamining/LysenkoRSMRA16}
Artem Lysenko, Irina~A. Roznovat, Mansoor Saqi, Alexander Mazein,
  Christopher~J. Rawlings, and Charles Auffray.
\newblock Representing and querying disease networks using graph databases.
\newblock {\em BioData Min.}, 9:23, 2016.
\newblock URL: \url{https://doi.org/10.1186/s13040-016-0102-8}, \href
  {http://dx.doi.org/10.1186/s13040-016-0102-8}
  {\path{doi:10.1186/s13040-016-0102-8}}.

\bibitem{DBLP:journals/ton/MaCLMR10}
Richard T.~B. Ma, Dah{-}Ming Chiu, John~Chi{-}Shing Lui, Vishal Misra, and Dan
  Rubenstein.
\newblock Internet economics: The use of shapley value for {ISP} settlement.
\newblock {\em {IEEE/ACM} Trans. Netw.}, 18(3):775--787, 2010.
\newblock URL: \url{https://doi.org/10.1109/TNET.2010.2049205}, \href
  {http://dx.doi.org/10.1109/TNET.2010.2049205}
  {\path{doi:10.1109/TNET.2010.2049205}}.

\bibitem{DBLP:conf/icdt/MartensT18}
Wim Martens and Tina Trautner.
\newblock Evaluation and enumeration problems for regular path queries.
\newblock In Benny Kimelfeld and Yael Amsterdamer, editors, {\em 21st
  International Conference on Database Theory, {ICDT} 2018, March 26-29, 2018,
  Vienna, Austria}, volume~98 of {\em LIPIcs}, pages 19:1--19:21, None, 2018.
  Schloss Dagstuhl - Leibniz-Zentrum f{\"{u}}r Informatik.
\newblock URL: \url{https://doi.org/10.4230/LIPIcs.ICDT.2018.19}, \href
  {http://dx.doi.org/10.4230/LIPIcs.ICDT.2018.19}
  {\path{doi:10.4230/LIPIcs.ICDT.2018.19}}.

\bibitem{DBLP:journals/tods/MartensT19}
Wim Martens and Tina Trautner.
\newblock Dichotomies for evaluating simple regular path queries.
\newblock {\em {ACM} Trans. Database Syst.}, 44(4):16:1--16:46, 2019.
\newblock URL: \url{https://doi.org/10.1145/3331446}, \href
  {http://dx.doi.org/10.1145/3331446} {\path{doi:10.1145/3331446}}.

\bibitem{mascle2021responsibility}
Corto Mascle, Christel Baier, Florian Funkev, Simon Jantsch, and Stefan Kiefer.
\newblock Responsibility and verification: Importance value in temporal logics.
\newblock In {\em 2021 36th Annual ACM/IEEE Symposium on Logic in Computer
  Science (LICS)}, pages 1--14. IEEE, 2021.

\bibitem{DBLP:journals/pvldb/MeliouGMS11}
Alexandra Meliou, Wolfgang Gatterbauer, Katherine~F. Moore, and Dan Suciu.
\newblock The complexity of causality and responsibility for query answers and
  non-answers.
\newblock {\em Proc. {VLDB} Endow.}, 4(1):34--45, 2010.

\bibitem{DBLP:journals/siamcomp/MendelzonW95}
Alberto~O. Mendelzon and Peter~T. Wood.
\newblock Finding regular simple paths in graph databases.
\newblock {\em {SIAM} J. Comput.}, 24(6):1235--1258, 1995.
\newblock URL: \url{https://doi.org/10.1137/S009753979122370X}, \href
  {http://dx.doi.org/10.1137/S009753979122370X}
  {\path{doi:10.1137/S009753979122370X}}.

\bibitem{moretti2007class}
Stefano Moretti, Fioravante Patrone, and Stefano Bonassi.
\newblock The class of microarray games and the relevance index for genes.
\newblock {\em Top}, 15(2):256--280, 2007.

\bibitem{DBLP:journals/tase/NarayanamN11}
Ramasuri Narayanam and Yadati Narahari.
\newblock A shapley value-based approach to discover influential nodes in
  social networks.
\newblock {\em {IEEE} Trans Autom. Sci. Eng.}, 8(1):130--147, 2011.
\newblock URL: \url{https://doi.org/10.1109/TASE.2010.2052042}, \href
  {http://dx.doi.org/10.1109/TASE.2010.2052042}
  {\path{doi:10.1109/TASE.2010.2052042}}.

\bibitem{DBLP:conf/pods/ReshefKL20}
Alon Reshef, Benny Kimelfeld, and Ester Livshits.
\newblock The impact of negation on the complexity of the shapley value in
  conjunctive queries.
\newblock In Dan Suciu, Yufei Tao, and Zhewei Wei, editors, {\em Proceedings of
  the 39th {ACM} {SIGMOD-SIGACT-SIGAI} Symposium on Principles of Database
  Systems, {PODS} 2020, Portland, OR, USA, June 14-19, 2020}, pages 285--297,
  None, 2020. {ACM}.
\newblock URL: \url{https://doi.org/10.1145/3375395.3387664}, \href
  {http://dx.doi.org/10.1145/3375395.3387664}
  {\path{doi:10.1145/3375395.3387664}}.

\bibitem{sadowski2014fraud}
Gorka Sadowski and Philip Rathle.
\newblock Fraud detection: Discovering connections with graph databases.
\newblock {\em White Paper-Neo Technology-Graphs are Everywhere}, 13, 2014.

\bibitem{DBLP:conf/tapp/SalimiBSB16}
Babak Salimi, Leopoldo~E. Bertossi, Dan Suciu, and Guy~Van den Broeck.
\newblock Quantifying causal effects on query answering in databases.
\newblock In {\em TaPP}, None, 2016. {USENIX} Association.

\bibitem{shapley:book1952}
Lloyd~S Shapley.
\newblock A value for n-person games.
\newblock In Harold~W. Kuhn and Albert~W. Tucker, editors, {\em Contributions
  to the Theory of Games II}, pages 307--317. Princeton University Press,
  Princeton, 1953.

\bibitem{DBLP:journals/siamcomp/Toda91}
Seinosuke Toda.
\newblock {PP} is as hard as the polynomial-time hierarchy.
\newblock {\em {SIAM} J. Comput.}, 20(5):865--877, 1991.

\bibitem{DBLP:journals/snam/CampenHHL18}
Tjeerd van Campen, Herbert Hamers, Bart Husslage, and Roy Lindelauf.
\newblock A new approximation method for the shapley value applied to the {WTC}
  9/11 terrorist attack.
\newblock {\em Soc. Netw. Anal. Min.}, 8(1):3:1--3:12, 2018.
\newblock URL: \url{https://doi.org/10.1007/s13278-017-0480-z}, \href
  {http://dx.doi.org/10.1007/s13278-017-0480-z}
  {\path{doi:10.1007/s13278-017-0480-z}}.

\bibitem{DBLP:conf/stoc/Vardi82}
Moshe~Y. Vardi.
\newblock The complexity of relational query languages (extended abstract).
\newblock In Harry~R. Lewis, Barbara~B. Simons, Walter~A. Burkhard, and
  Lawrence~H. Landweber, editors, {\em Proceedings of the 14th Annual {ACM}
  Symposium on Theory of Computing, May 5-7, 1982, San Francisco, California,
  {USA}}, pages 137--146, None, 1982. {ACM}.
\newblock URL: \url{https://doi.org/10.1145/800070.802186}, \href
  {http://dx.doi.org/10.1145/800070.802186} {\path{doi:10.1145/800070.802186}}.

\bibitem{DBLP:conf/pods/Yannakakis90}
Mihalis Yannakakis.
\newblock Graph-theoretic methods in database theory.
\newblock In Daniel~J. Rosenkrantz and Yehoshua Sagiv, editors, {\em
  Proceedings of the Ninth {ACM} {SIGACT-SIGMOD-SIGART} Symposium on Principles
  of Database Systems, April 2-4, 1990, Nashville, Tennessee, {USA}}, pages
  230--242, None, 1990. {ACM} Press.
\newblock URL: \url{https://doi.org/10.1145/298514.298576}, \href
  {http://dx.doi.org/10.1145/298514.298576} {\path{doi:10.1145/298514.298576}}.

\bibitem{DBLP:conf/IEEEwisa/YiLFS17}
Ningning Yi, Chunfang Li, Xin Feng, and Minyong Shi.
\newblock Design and implementation of movie recommender system based on graph
  database.
\newblock In {\em 14th Web Information Systems and Applications Conference,
  {WISA} 2017, Liuzhou, Guangxi Province, China, November 11-12, 2017}, pages
  132--135, None, 2017. {IEEE}.
\newblock URL: \url{https://doi.org/10.1109/WISA.2017.34}, \href
  {http://dx.doi.org/10.1109/WISA.2017.34} {\path{doi:10.1109/WISA.2017.34}}.

\bibitem{yoon2017use}
Byoung-Ha Yoon, Seon-Kyu Kim, and Seon-Young Kim.
\newblock Use of graph database for the integration of heterogeneous biological
  data.
\newblock {\em Genomics \& informatics}, 15(1):19, 2017.

\bibitem{DBLP:conf/ijcai/YunVCB18}
Bruno Yun, Srdjan Vesic, Madalina Croitoru, and Pierre Bisquert.
\newblock Inconsistency measures for repair semantics in {OBDA}.
\newblock In {\em {IJCAI}}, pages 1977--1983. ijcai.org, 2018.

\end{thebibliography}

\appendix

\end{document}